\declaretheorem[]{theorem}
\declaretheorem[sibling=theorem]{definition}
\declaretheorem[sibling=theorem]{lemma}
\declaretheorem[sibling=theorem]{fact}
\newcommand{\loss}{\text{loss}}
\renewcommand{\R}{\mathbb{R}}
\newcommand{\tr}{\text{tr}}
\newcommand{\transp}{\mathsf{T}}
\newcommand{\Pclass}{\text{P}}
\newcommand{\NPclass}{\text{NP}}
\newcommand{\val}{\text{value}}
\newcommand{\Hamming}{\text{Hamming}}
\DeclareMathOperator*{\argmin}{arg\,min}
\newcommand{\eps}{\varepsilon}
\begin{document}
\title{On the Fine-Grained Complexity\\of Empirical Risk Minimization:\\ Kernel Methods and Neural Networks}
\date{}
\author{
	Arturs Backurs\\ MIT\\\texttt{backurs@mit.edu}
	\and Piotr Indyk\\ MIT\\\texttt{indyk@mit.edu}
	\and Ludwig Schmidt\\ MIT\\\texttt{ludwigs@mit.edu}
	}
\clearpage\maketitle
 
\begin{abstract}
Empirical risk minimization (ERM) is ubiquitous in machine learning and underlies most supervised learning methods.
While there has been a large body of work on algorithms for various ERM problems, the exact computational complexity of ERM is still not understood.
We address this issue for multiple popular ERM problems including kernel SVMs, kernel ridge regression, and training the final layer of a neural network.
In particular, we give conditional hardness results for these problems based on complexity-theoretic assumptions such as the Strong Exponential Time Hypothesis.
Under these assumptions, we show that there are no algorithms that solve the aforementioned ERM problems to high accuracy in sub-quadratic time.
We also give similar hardness results for computing the gradient of the empirical loss, which is the main computational burden in many non-convex learning tasks.
 \end{abstract}

\section{Introduction}
Empirical risk minimization (ERM) has been highly influential in modern machine learning~\cite{vapnik1998statistical}.
ERM underpins many core results in statistical learning theory and is one of the main computational problems in the field.
Several important methods such as support vector machines (SVM), boosting, and neural networks follow the ERM paradigm~\cite{SS14}.
As a consequence, the algorithmic aspects of ERM have received a vast amount of attention over the past decades.
This naturally motivates the following basic question:

\begin{center}
\emph{What are the computational limits for ERM algorithms?}
\end{center}

In this work, we address this question both in convex and non-convex settings.
Convex ERM problems have been highly successful in a wide range of applications, giving rise to popular methods such as SVMs and logistic regression.
Using tools from convex optimization, the resulting problems can be solved in polynomial time.
However, the exact time complexity of many important ERM problems such as kernel SVMs is not yet well understood.
As the size of data sets in machine learning continues to grow, this question is becoming increasingly important.
For ERM problems with millions of high-dimensional examples, even quadratic time algorithms can become painfully slow (or expensive) to run.

Non-convex ERM problems have also attracted extensive research interest, e.g., in the context of deep neural networks.
First order methods that follow the gradient of the empirical loss are not guaranteed to find the global minimizer in this setting.
Nevertheless, variants of gradient descent are by far the most common method for training large neural networks.
Here, the computational bottleneck is to compute a number of gradients, not necessarily to minimize the empirical loss globally.
Although we can compute gradients in polynomial time, the large number of parameters and examples in modern deep learning still makes this a considerable computational challenge.

Unfortunately, there are only few existing results concerning the exact time complexity of ERM or gradient computations.
Since the problems have polynomial time algorithms, the classical machinery from complexity theory (such as NP hardness) is too coarse to apply.
Oracle lower bounds from optimization offer useful guidance for convex ERM problems, but the results only hold for limited classes of algorithms.
Moreover, they do not account for the \emph{cost} of executing the oracle calls,  as they simply lower bound their number. 
Overall, we do not know if common ERM problems allow for algorithms that compute a high-accuracy solution in sub-quadratic or even nearly-linear time for all instances.\footnote{More efficient algorithms exist if the running time is allowed to be polynomial in the accuracy parameter, e.g., \cite{shalev2007pegasos} give such an algorithm for the kernel SVM problem that we consider as well.
See also the discussion at the end of this section.}
Furthermore, we do not know if there are more efficient techniques for computing (mini-)batch gradients than simply treating each example in the batch independently.\footnote{Consider a network with one hidden layer containing $n$ units and a training set with $m$ examples, for simplicity in small dimension $d = O(\log n)$.
No known results preclude an algorithm that computes a full gradient in time $O( (n + m) \log n)$.
This would be significantly faster than the standard $O( n \cdot m \cdot \log n)$ approach of computing the full gradient example by example.}

We address both questions for multiple well-studied ERM problems.
First, we give conditional hardness results for minimizing empirical risk in several settings, including kernel SVMs, kernel ridge regression (KRR), and training the top layer of a neural network.
Our results give evidence that no algorithms can solve these problems to high accuracy in strongly sub-quadratic time.
Moreover, we provide similar conditional hardness results for kernel PCA.
All of these methods are popular learning algorithms due to the expressiveness of the kernel or network embedding.
Our results show that this expressiveness also leads to an expensive computational problem.

Second, we address the complexity of computing a gradient for the empirical risk of a neural network.
In particular, we give evidence that computing the gradient of the top layer in a neural network takes time that is ``rectangular'', i.e., the time complexity cannot be significantly better than $O(n \cdot m)$, 
where $m$ is the number of examples and $n$ is the number of units in the network.
Hence, there are no algorithms that compute batch gradients faster than handling each example individually, unless common complexity-theoretic assumptions fail.

Our hardness results are based on recent advances in fine-grained complexity and build on conjectures such as the Strong Exponential Time Hypothesis (SETH)~\cite{impagliazzo2001problems,impagliazzo2001complexity,vassilevska2015hardness}.
SETH concerns the classic satisfiability problem for formulas in Conjunctive Normal Form (CNF).
Informally, the conjecture states that there is no algorithm for checking satisfiability of a formula with $n$ variables and $m$ clauses in time less than $O(c^n \cdot \mbox{poly}(m))$ for some $c<2$.\footnote{Note that SETH can be viewed as a significant strengthening of the $\Pclass \neq \NPclass$ conjecture, which only postulates that there is no \emph{polynomial time} algorithm for CNF satisfiability. The best known algorithms for CNF satisfiability have running times of the form $O(2^{(1-o(1))n} \cdot \mbox{poly}(m))$. }  While our results are conditional, SETH has been employed in many recent hardness results. Its plausibility stems from the fact that, despite 60 years of research on satisfiability algorithms, no such improvement has been discovered.

We emphasize that our ERM lower bounds focus on algorithms that compute the solution to high accuracy, e.g., with a $\log 1/\eps$ dependence on the approximation error $\eps$.
A (doubly) logarithmic dependence on $1 / \eps$ is generally seen as the ideal rate of convergence in optimization.
Algorithms with this property have also been studied extensively in the machine learning community, e.g., see the references in \cite{BS}.
However, approximate solutions to ERM problems can be sufficient for good generalization in learning tasks.
Indeed, stochastic gradient descent (SGD) is often advocated as an efficient learning algorithm despite its polynomial dependence on $1 / \eps$ in the optimization error \cite{shalev2007pegasos,BB07}.
Our results support this viewpoint since SGD sidesteps the quadratic time complexity of our lower bounds.

Furthermore, our results do not rule out algorithms that achieve a sub-quadratic running time for well-behaved instances, e.g., instances with low-dimensional structure.
Indeed, many such approaches have been investigated in the literature, for instance the Nystr\"om method or random features for kernel problems \cite{WS01,RR08}.
Our results offer an explanation for the wide variety of techniques.
The lower bounds are evidence that there is no ``silver bullet'' algorithm for solving the aforementioned ERM problems in sub-quadratic time, to high accuracy, and for all instances.

\paragraph{Paper outline.}
In the next section, we briefly introduce the relevant background from fine-grained complexity theory.
Section \ref{sec:contributions} then presents our formal results.
In Sections~\ref{s:svm}, \ref{s:kpca}, and \ref{s:krr} we outline the proofs for kernel SVM, kernel PCA, and kernel ridge regression, respectively.
Section~\ref{s:retraining} describes our proof for neural networks.
Due to space constraints, we defer the proof details to the appendix in the supplementary  material.

\section{Background}
\label{sec:background}

\paragraph{Fine-grained complexity} We obtain our conditional hardness results via reductions from two well-studied problems: {\em Orthogonal Vectors} and {\em Bichromatic Hamming Close Pair}. 

\begin{definition}[Orthogonal Vectors problem (OVP)] \label{OVP}
Given two sets $A=\{a_1, \ldots, a_n\}\subseteq \{0,1\}^d$ and $B=\{b_1, \ldots, b_n\}\subseteq \{0,1\}^d$ of $n$ binary vectors, decide if there exists a pair $a \in A$ and $b \in B$ such that $a^\transp b = 0$.
\end{definition}

For OVP, we can assume without loss of generality that all vectors in $B$ have the same number of $1$s.
This can be achieved by appending $d$ entries to every $b_i$ and setting the necessary number of them to $1$ and the rest to $0$.
We then append $d$ entries to every $a_i$ and set all of them to $0$.

\begin{definition}[Bichromatic Hamming Close Pair (BHCP) problem] \label{BHCP}
  Given two sets $A=\{a_1, \ldots, a_n\}\subseteq \{0,1\}^d$ and $B=\{b_1, \ldots, b_n\}\subseteq \{0,1\}^d$ of $n$ binary vectors and an integer $t \in \{2, \ldots, d\}$, decide if there exists a pair $a \in A$ and $b \in B$ such that the number of coordinates in which they differ is less than $t$ (formally, \mbox{$\Hamming(a,b):=||a-b||_1<t$}). If there is such a pair $(a, b)$, we call it a \emph{close pair}.
\end{definition}

It is known that both OVP and BHCP require almost quadratic time (i.e.,  $n^{2-o(1)}$) for any $d=\omega(\log n)$ assuming SETH~\cite{alman2015probabilistic}.\footnote{We use $\omega(g(n))$ to denote any function $f$ such that $\lim_{n \to \infty} f(n)/g(n)=\infty$. Similarly, we use $o(g(n))$ to denote any function $f$  such that $\lim_{n \to \infty} f(n)/g(n)=0$. Consequently, we will refer to functions of the form  $\omega(1)$ as {\em super-constant} and to $n^{\omega(1)}$ as {\em super-polynomial}.}
Furthermore, if we allow the sizes $|A|=n$ and $|B|=m$ to be different, both problems require $(nm)^{1-o(1)}$ time assuming SETH, as long as $m=n^\alpha$ for some constant $\alpha \in (0,1)$~\cite{bringmann2015quadratic}.
Our proofs will proceed by embedding OVP and BHCP instances into ERM problems.
Such a reduction then implies that the ERM problem requires almost quadratic time if the SETH is true.
If we could solve the ERM problem faster, we would also obtain a faster algorithm for the satisfiability problem.
 
\section{Our contributions}
\label{sec:contributions}
We now describe our main contributions, starting with conditional hardness results for ERM problems.

\subsection{Kernel ERM problems}
We provide hardness results for multiple kernel problems.
In the following, let $x_1, \ldots, x_n \in \R^d$ be the $n$ input vectors, where $d=\omega(\log n)$.	
We use $y_1, \ldots, y_n \in \R$ as $n$ labels or target values.
Finally, let $k(x,x')$ denote a kernel function and let $K \in \R^{n \times n}$ be the corresponding kernel matrix, defined as $K_{i,j}:=k(x_i,x_j)$ \cite{scholkopf2001learning}.
Concretely, we focus on the Gaussian kernel $k(x,x'):=\exp\left(-C\|x-x'\|_2^2\right)$ for some $C>0$.
We note that our results can be generalized to any kernel with exponential tail.
	
\subsubsection{Kernel SVM}
For simplicity, we present our result for hard-margin SVMs without bias terms.
This gives the following optimization problem.
 
\begin{definition}[Hard-margin SVM] \label{def_primal}
	A (primal) hard-margin SVM is  an optimization problem of the following form:
	\begin{equation} \label{eq_primal}
		\begin{aligned}
			& \underset{\alpha_1, \ldots, \alpha_n \geq 0}{\text{minimize}}
			& & {1 \over 2}\sum_{i,j=1}^n \alpha_i \, \alpha_j \, y_i \, y_j \, k(x_i,x_j) \\
			& \text{subject to}
			& & y_i f(x_i) \geq 1, \ \ i = 1, \ldots, n,
		\end{aligned}
	\end{equation}
	where $f(x):=\sum_{i=1}^n \alpha_i y_i k(x_i,x)$.
\end{definition}

The following theorem is our main result for SVMs.
In Section~\ref{s:hardness}  we provide similar hardness results for other common SVM variants.
\begin{theorem} \label{hardness}	
Let $k(a,a')$ be the Gaussian kernel with $C=100\log n$ and let $\eps=\exp(-\omega(\log^2 n))$.
Then approximating the optimal value of Equation \eqref{eq_primal} within a multiplicative factor $1+\eps$ requires almost quadratic time assuming SETH.
	\end{theorem}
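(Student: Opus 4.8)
The plan is to reduce from the Bichromatic Hamming Close Pair problem, which by~\cite{alman2015probabilistic} requires $n^{2-o(1)}$ time under SETH whenever $d=\omega(\log n)$ (an equivalent starting point is Orthogonal Vectors, where ``orthogonal'' versus ``non-orthogonal'' already carries an additive Hamming gap of $2$). From an instance $(A,B,t)$ I build an instance of~\eqref{eq_primal}: each (padded) vector of $A$ becomes a data point with label $+1$ and each vector of $B$ a point with label $-1$, so that $y_i f(x_i)\ge 1$ asks $f$ to separate the two colour classes. Since the vectors are binary, the Gaussian kernel with $C=100\log n$ collapses to $k(x,x')=n^{-100\,\Hamming(x,x')}$, so that one extra coordinate of disagreement shrinks a kernel entry by a factor $n^{100}$ -- this is the quantitative lever behind the whole argument, and the reason $C$ is taken this large. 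The kernel being strictly positive definite on distinct points, the program~\eqref{eq_primal} is always feasible; its optimum $\mathrm{OPT}=\tfrac12\norm{w^\star}^2=\tfrac12\sum_i\alpha_i^\star$ is therefore well defined, and this is the quantity to approximate.

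The key preprocessing step -- which I expect to be the main obstacle -- is a normalization lemma: pad the instance so that (a) all $a\in A$ and all $b\in B$ have equal Hamming weight, and (b) in a ``yes'' instance some \emph{cross} pair $(a,b)$ with $a\in A$, $b\in B$ lies at Hamming distance $D^\star$, while \emph{every} other pair of points -- two points of the same colour, or a cross pair not coming from a close pair of $A\times B$ -- lies at Hamming distance at least $D^\star+1$. Equivalently, after padding the ``signal'' kernel entry $\delta_0:=n^{-100D^\star}$ exceeds every other off-diagonal entry by a factor $\ge n^{100}$, hence dominates the \emph{sum} of all the other off-diagonal entries, which is at most $n^2\cdot n^{-100}\delta_0=o(\delta_0)$. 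The delicate point is the monochromatic distances: two points of the same colour may differ in a single coordinate, so the padding has to lift every such distance above $D^\star$ without perturbing the $A\times B$ distances. Starting from Orthogonal Vectors one can append coordinates so that an orthogonal cross pair lands at Hamming distance $1$ while all remaining pairs -- in particular all monochromatic ones, which consist of distinct equal-weight vectors and hence differ in at least two coordinates -- are at distance $\ge 2$; this makes (b) automatic.

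Given such an instance I would estimate $\mathrm{OPT}$ in the two cases. In a ``no'' instance every off-diagonal kernel entry is at most $n^{-100}$, so the kernel matrix is $I+E$ with $\norm{E}$ polynomially small and the feature vectors are almost orthonormal; inserting the choice $\alpha_i=1$ into the dual $\max_{\alpha\ge 0}\bigl(\sum_i\alpha_i-\tfrac12\sum_{i,j}\alpha_i\alpha_j y_iy_j k(x_i,x_j)\bigr)$ and exhibiting the matching feasible primal $w=\sum_i y_i\alpha_i\phi(x_i)$ pins $\mathrm{OPT}$ down to $\tfrac{|A|+|B|}{2}\bigl(1\pm\mathrm{poly}(n)\,n^{-100}\bigr)$. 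In a ``yes'' instance the distinguished pair $(a,b)$ has $\langle\phi(a),\phi(b)\rangle=\delta_0$ with $\norm{\phi(a)}=\norm{\phi(b)}=1$; from $y_a f(a)\ge 1$ and $y_b f(b)\ge 1$ we get $\langle w,\phi(a)-\phi(b)\rangle\ge 2$, and since $\norm{\phi(a)-\phi(b)}^2=2-2\delta_0$ any feasible $w$ has $\tfrac12\norm{w}^2\ge 1/(1-\delta_0)$; feeding the almost-orthonormal structure of the remaining points back into the dual then gives $\mathrm{OPT}\ge\tfrac{|A|+|B|}{2}+\Omega(\delta_0)$. This is precisely where the normalization matters: the $\Omega(\delta_0)$ signal must not be drowned by the accumulated $\mathrm{poly}(n)\cdot n^{-100}$ perturbations, which is why $\delta_0$ must dominate them.

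Comparing the two estimates, the ratio of the ``yes'' optimum to the ``no'' optimum is at least $1+\Omega\bigl(\delta_0/(|A|+|B|)\bigr)$; since $\delta_0=n^{-\Theta(D^\star)}$ and $D^\star$ can be chosen with $D^\star\log n=\omega(\log^2 n)$ (possible as soon as $d=\omega(\log n)$), this ratio is $1+\exp(-\omega(\log^2 n))$. Hence for any $\eps$ below this gap, an algorithm approximating the value of~\eqref{eq_primal} within $1+\eps$ in time $O\bigl((|A|+|B|)^{2-\gamma}\bigr)$ would decide Bichromatic Hamming Close Pair in time $n^{2-\gamma'}$, contradicting SETH; this yields the theorem. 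The two steps needing the most care are the normalization lemma (controlling the monochromatic distances) and making ``almost orthonormal'' quantitative enough that the $\Theta(\delta_0)$ gap survives -- which is also why $\eps$ need only be super-polynomially, rather than inverse-polynomially, small.
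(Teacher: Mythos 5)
Your proposal departs from the paper's actual argument in a structural way: you build a \emph{single} SVM on $A\cup B$ and try to read off the BHCP answer from its optimal value, whereas the paper runs \emph{three} SVMs --- one on $A$ alone with labels $+1$, one on $B$ alone with labels $-1$, and one on $A\cup B$ --- and decides BHCP by comparing $\val(A,B)$ to $\val(A)+\val(B)$. This is not a cosmetic difference. Your single-SVM route forces you to control all the \emph{monochromatic} kernel entries $k(a_i,a_j)$ and $k(b_i,b_j)$, which is exactly the ``normalization lemma'' you flag as the main obstacle. In the paper's difference $\val(A,B)-\val(A)-\val(B)$ those monochromatic contributions appear on both sides and cancel to within a tiny error (Lemmas~\ref{nocase} and~\ref{yescase}), so no such normalization is needed.

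The normalization you sketch does not, in fact, exist. You ask for a padding under which some cross pair $(a,b)$, $a\in A$, $b\in B$, is at Hamming distance $D^\star$ while every other pair --- including every pair of distinct points of the same colour --- is at distance at least $D^\star+1$. Take your own OV-based instantiation: if all vectors in a colour class have equal weight, a cross pair at distance $1$ is impossible because for equal-weight binary $a,b$ we have $\Hamming(a,b)=|a|+|b|-2a^\transp b$, which is $|a|+|b|$ for an orthogonal pair and hence at least $2$ (and even); pushing the weights so that $|a|+|b|=1$ forces one colour class to be a single vector. More generally, for $n$ distinct equal-weight vectors in $\{0,1\}^{d'}$, monochromatic distances can be as small as $2$ while $D^\star=d'-w_A-w_B$ is forced to grow (you need $\binom{d'}{w_A}\ge n$), so you cannot make all same-colour distances exceed all cross distances. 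This is precisely the obstruction the three-SVM construction is designed to avoid, and it is the reason Lemma~\ref{bounded} only needs the weak fact $k(a_i,a_j)\le 1/(10n^2)$ for $i\neq j$ rather than anything about the relative order of mono- and cross-distances.

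There is a secondary soft spot even granting the normalization: your ``yes''-case lower bound $\mathrm{OPT}\ge \tfrac{|A|+|B|}{2}+\Omega(\delta_0)$ is asserted via ``feeding the almost-orthonormal structure back into the dual'', but the paper gets the analogous bound by a concrete trick --- it plugs the \emph{optimal} dual solutions of the two single-colour SVMs into the dual of the joint SVM, and uses Lemma~\ref{bounded} ($\tfrac12\le\alpha_i^\star\le n$) to certify a gain of at least $\tfrac14\exp(-100\log n\cdot(t-1))$ from the close pair. Without the single-colour SVMs there is no ready-made feasible dual point to plug in, so this step would need a replacement argument.

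In short: the overall strategy (embed BHCP/OV into a Gaussian-kernel hard-margin SVM, exploit the $n^{-100}$-per-Hamming-unit decay, compare ``yes'' vs.~``no'' optima, finish with a precision argument) matches the paper, but the specific decomposition into three SVMs is the crucial idea you are missing, and the normalization you propose in its place cannot be carried out.
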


\subsubsection{Kernel Ridge Regression}
Next we consider Kernel Ridge Regression, which is formally defined as follows.

\begin{definition}[Kernel ridge regression] \label{def_ridge}
	Given a real value $\lambda\geq 0$, the goal of kernel ridge regression is to output
	$$
		\argmin_{\alpha \in \R^n} {1 \over 2}||y-K\alpha||_2^2+{\lambda \over 2}\alpha^\transp K \alpha.
	$$
\end{definition}
This problem is equivalent to  computing  the vector $(K+\lambda I)^{-1}y$.
We focus on the special case where $\lambda=0$ and the vector $y$ has all equal entries $y_1=\ldots = y_n=1$.
In this case, the entrywise sum of $K^{-1}y$ is equal to the sum of the entries in $K^{-1}$.
Thus, it is sufficient to show hardness for computing the latter quantity. 

\begin{theorem} \label{ridge_regression}
Let $k(a,a')$ be the Gaussian kernel for any parameter $C=\omega(\log n)$ and let $\eps=\exp(-\omega(\log^2 n))$. 
Then computing the sum of the entries in $K^{-1}$ up to a multiplicative factor of $1+\eps$ requires almost quadratic time assuming SETH. 
\end{theorem}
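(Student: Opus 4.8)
The plan is to reduce from BHCP; as the paper already notes, it suffices to lower bound the time needed to approximate the sum $\mathbf 1^\transp K^{-1}\mathbf 1$ of the entries of $K^{-1}$. Starting from a BHCP instance $(A,B,t)$ with $A,B\subseteq\{0,1\}^d$ and $d=\omega(\log n)$ (say, $d=\log^2 n$), I would first re-encode the two sides so that only cross pairs carry information. Fix an efficiently constructible constant-weight binary code of length $\ell=\Theta(d)$ and weight $\ell/2$ whose relative distance is a constant below $1/2$; since $d=\omega(\log n)$ this code has $2^{\Omega(\ell)}\ge n$ codewords and minimum distance $\delta\ge t$. Assigning distinct codewords $c_i$ to the $a_i$ and $c'_j$ to the $b_j$, set $\tilde a_i:=(a_i\,\|\,c_i\,\|\,0^\ell)$ and $\tilde b_j:=(b_j\,\|\,0^\ell\,\|\,c'_j)$, so that $\Hamming(\tilde a_i,\tilde b_j)=\Hamming(a_i,b_j)+\ell$ while $\Hamming(\tilde a_i,\tilde a_{i'})\ge\delta$ and $\Hamming(\tilde b_j,\tilde b_{j'})\ge\delta$. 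Let $K$ be the Gaussian kernel matrix (with parameter $C$) of $\{\tilde a_i\}\cup\{\tilde b_j\}$, and let $K_A,K_B$ be the kernel matrices of $\{\tilde a_i\}$ and $\{\tilde b_j\}$ alone.

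Each of $K,K_A,K_B$ equals $I+E$ with $\norm{E}_{\mathrm{op}}\le 2n\,e^{-C\delta}\ll 1$ (distinct points and $C\delta=\omega(\log^2 n)$), so $K^{-1}=\sum_{k\ge0}(-E)^k$ converges, and likewise for $K_A,K_B$. Writing $E=E_A\oplus E_B+E_{\mathrm{cross}}$ with $E_AE_B=0$, the quantity $\Delta:=\mathbf 1^\transp K^{-1}\mathbf 1-\mathbf 1^\transp K_A^{-1}\mathbf 1-\mathbf 1^\transp K_B^{-1}\mathbf 1$ equals $-2\sum_{i,j}k(\tilde a_i,\tilde b_j)$ plus higher-order terms, each of which carries at least one factor $E_{\mathrm{cross}}$ and is hence bounded by $\mathrm{poly}(n)\cdot\norm{E_{\mathrm{cross}}}_{\mathrm{op}}\norm{E}_{\mathrm{op}}\lesssim n^3e^{-C(\ell+\delta)}$. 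If a close pair exists then $\sum_{i,j}k(\tilde a_i,\tilde b_j)\ge e^{-C(t-1+\ell)}$, and if none does then $\sum_{i,j}k(\tilde a_i,\tilde b_j)\le n^2e^{-C(t+\ell)}$. Since $C=\omega(\log n)$ gives $n^2e^{-C}\to0$, and $\delta\ge t$ gives $n^3e^{-C(\ell+\delta)}\ll e^{-C(t-1+\ell)}$, we obtain $\Delta<-e^{-C(t-1+\ell)}$ in the ``yes'' case and $\Delta>-\tfrac12 e^{-C(t-1+\ell)}$ in the ``no'' case. An algorithm that returns a $(1+\eps)$-approximation to $\mathbf 1^\transp(\cdot)^{-1}\mathbf 1$ perturbs each of the three terms (all of size $\Theta(n)$) by $O(\eps n)$, hence $\Delta$ by $O(\eps n)$, which is $\ll e^{-C(t-1+\ell)}$ as soon as $\eps=\exp(-\omega(\log^2 n))$ — note $C(t-1+\ell)=\Theta(Cd)=\omega(\log^2 n)$. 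Thus three oracle calls plus $O(n\cdot\mathrm{polylog}\,n)$ preprocessing decide BHCP, and an $O(n^{2-\gamma})$ algorithm for the kernel problem would refute SETH.

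The crux, and the step I expect to be most delicate, is precisely keeping the close-pair signal alive: one close pair at Hamming distance $\Theta(d)$ shifts $\mathbf 1^\transp K^{-1}\mathbf 1$ by only $e^{-\Theta(Cd)}$, which is swamped both by the $\Theta(n^2)$ intra-class kernel values and by the higher-order terms of the Neumann series. Subtracting $\mathbf 1^\transp K_A^{-1}\mathbf 1$ and $\mathbf 1^\transp K_B^{-1}\mathbf 1$ is what cancels the intra-class part; the hypothesis $C=\omega(\log n)$ is what makes the Gaussian kernel sharp enough for a single close pair to dominate the up-to-$n^2$ non-close cross pairs; the code's minimum distance $\delta\ge t$ is what controls the higher-order terms; and $\eps$ being as small as $\exp(-\omega(\log^2 n))$ is what renders the multiplicative error harmless. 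Making all of these estimates hold simultaneously — while also checking that $K$ is genuinely invertible, that the whole reduction runs in sub-quadratic time, and that the ambient dimension stays $\omega(\log n)$ so that the cited SETH-hardness of BHCP applies — is the technical heart of the argument.
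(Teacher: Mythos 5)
Your proposal follows the same route as the paper: reduce from BHCP, form $\Delta := s(K^{-1}) - s(K_A^{-1}) - s(K_B^{-1})$, and show $\Delta \approx -2\sum_{i,j} k(a_i,b_j)$, so that the BHCP-detecting quantity from the kernel-PCA argument (there isolated via $s(K_{A,B}) - s(K_A) - s(K_B)$, here via inverses) reappears. The paper carries this out with the binomial inverse identity (Lemma~\ref{inverse}) together with an entrywise ``almost identity matrix'' lemma, which yields the sharper statement $s(K^{-1}) - s(K_A^{-1}) - s(K_B^{-1}) = -s(Y)(1 \pm n^{-\omega(1)})$, i.e., a \emph{relative} $n^{-\omega(1)}$ error on the leading cross-term. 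Your Neumann-series bound, phrased in operator norm as $\mathrm{poly}(n)\cdot\norm{E_{\mathrm{cross}}}\norm{E}$, is lossier, and it is exactly this loss that drives you to pad the inputs with a constant-weight code so that the intra-class Hamming distance rises to $t$. With the paper's entrywise estimate that padding is unnecessary: distinct binary vectors already have pairwise Hamming distance at least $1$, and $C=\omega(\log n)$ then makes every off-diagonal kernel entry $n^{-\omega(1)}$, which is all the ``almost identity'' machinery requires. So the proposal is correct in outline and in most of its quantitative reasoning, but the code construction is avoidable overhead rather than a load-bearing idea.
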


\subsubsection{Kernel PCA}
Finally, we turn to the Kernel PCA problem, which we define as follows \cite{Murphy12}.
\begin{definition}[Kernel Principal Component Analysis (PCA)] \label{def_PCA}
	Let $1_n$ be an $n \times n$ matrix where each entry takes value $1/n$, and define $K':=(I-1_n)K(I-1_n)$. The goal of the kernel PCA problem is to output the $n$ eigenvalues of the matrix $K'$.
\end{definition}
In the above definition, the output only consists of the eigenvalues, not the eigenvectors. This is because computing all $n$ eigenvectors trivially takes at least quadratic time since the output itself has quadratic size.
Our hardness proof applies to the potentially simpler problem where only the eigenvalues are desired.
In fact, we show that even  computing {\em the sum} of the eigenvalues (i.e., the {\em trace} of the matrix) is hard.

\begin{theorem}
Let $k(a,a')$ be the Gaussian kernel with $C=100\log n$ and let $\eps=\exp(-\omega(\log^2 n))$.
Then approximating the sum of the eigenvalues of $K'=(I-1_n)K(I-1_n)$ within a multiplicative factor of $1+\eps$ requires almost quadratic time assuming SETH.	
\end{theorem}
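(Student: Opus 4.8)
Since the Gaussian kernel satisfies $k(x,x)=1$ we have $\tr(K)=n$, and since the centering matrix $I-1_n$ is idempotent, $\tr(K')=\tr\big((I-1_n)K(I-1_n)\big)=\tr\big(K(I-1_n)\big)=\tr(K)-\tr(K\cdot 1_n)=n-\tfrac1n\sum_{i,j}K_{i,j}$; using that the diagonal of $K$ is all ones, $\tr(K')=n-1-\tfrac1n\sum_{i\neq j}K_{i,j}$. In particular $\tr(K')\in[0,n-1]$, so a multiplicative $(1+\eps)$-approximation of $\tr(K')$ yields an \emph{additive} approximation of $\sum_{i\neq j}K_{i,j}$ with error at most $\eps n^2$. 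Hence it suffices to show that estimating the total kernel sum $\mathbf 1^{\transp}K\mathbf 1=\sum_{i,j}K_{i,j}$ (i.e.\ a high-accuracy Gaussian kernel-density value) to additive error $\eps n^2$ takes almost quadratic time — the same kind of reduction used for Theorem~\ref{hardness}.

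\textbf{Step 2: reduce from BHCP.} I would reduce from the Bichromatic Hamming Close Pair problem (Definition~\ref{BHCP}). Given an instance $(A,B,t)$ with $|A|=|B|=n$ and dimension $d$, I build a point set $X$ of $N=\Theta(n)$ points in $\mathbb R^{D}$ with $D=n^{o(1)}$ such that: (i) any two points of $X$ coming from the same side ($A$–$A$ or $B$–$B$) are at squared distance at least a large value $\Delta$; and (ii) the points encoding $a_i$ and $b_j$ are at squared distance exactly $\Hamming(a_i,b_j)+c$ for a fixed, known constant $c$. Here $d$ is taken to be super-logarithmic but small (say $d\le\log^2 n$, still $\omega(\log n)$), which keeps BHCP $n^{2-o(1)}$-hard while keeping $D$ — hence the cost of reading $X$ and forming $K$ — only $n^{2+o(1)}$.

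\textbf{Step 3: calibrate and conclude.} Set $C=100\log n$, so that (up to the fixed scaling of Step 2) $k(x,x')=n^{-100\|x-x'\|_2^2}$, and set $\tau=t+c$. If $(A,B,t)$ is a YES instance, some $A$–$B$ pair has squared distance $\le\tau-1$, so $\sum_{i\neq j}K_{i,j}\ge n^{-100(\tau-1)}$; if it is a NO instance, every $A$–$B$ pair has squared distance $\ge\tau$ and, by (i), every same-side pair squared distance $\ge\Delta\ge\tau$, so $\sum_{i\neq j}K_{i,j}\le 3N^2 n^{-100\tau}$. Since $n^{-100(\tau-1)}=n^{100}\cdot n^{-100\tau}$ dwarfs $3N^2 n^{-100\tau}$, the YES and NO values of $\tr(K')$ differ by roughly $n^{99-100\tau}/n$; choosing $\tau=\Theta\big(\log(1/\eps)/\log n\big)$ — which is $\omega(\log n)$ exactly because $\log(1/\eps)=\omega(\log^2 n)$, and capped at $\log^2 n$ to keep the reduction efficient — makes this gap exceed the $\eps n^2$ error budget from Step 1. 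A hypothetical strongly sub-quadratic $(1+\eps)$-approximation for $\tr(K')$ would then decide BHCP in strongly sub-quadratic time, contradicting SETH. (The same template yields kernel-PCA hardness for the other SVM variants of Section~\ref{s:hardness}.)

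\textbf{Where the difficulty is.} The crux is property (i): making same-side pairs far apart while keeping cross-pair distances equal to the original Hamming distance plus a \emph{fixed} shift. The naive gadget — appending a distinct high-weight identifier block on fresh coordinates to each $a_i$ and each $b_j$ — fails, since a cross pair then incurs the full weight of \emph{both} identifier blocks, which for a constant-weight identifier code unavoidably forces $\tau$ above the code's minimum distance, so some same-side pair drops below $\tau$ and contaminates the sum. Overcoming this — exploiting the real-valued geometry, a carefully designed identifier family, and/or structural assumptions one may impose on SETH-hard BHCP instances (equal weights, distinctness after de-duplication, pairwise separation) — is where essentially all the work lies; Steps 1 and 3 are then routine arithmetic, and this is also precisely why the statement requires the high accuracy $\eps=\exp(-\omega(\log^2 n))$ together with the specific choice $C=100\log n$.
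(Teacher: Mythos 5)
Your Step 1 (passing from $\tr(K')$ to the total kernel sum $s(K)$, then to the off-diagonal sum) is exactly the paper's reduction, and your choice of BHCP as the source problem is also correct. But your Step 2 contains a genuine gap, and you essentially say so yourself: you propose to embed $A\cup B$ into a \emph{single} point set whose kernel matrix has negligible same-side contributions, which requires a gadget that pushes $A$--$A$ and $B$--$B$ pairs far apart while shifting every $A$--$B$ squared distance by the same fixed constant $c$. You then explain why the naive identifier-block gadget fails and declare the construction ``where essentially all the work lies'' without supplying it. That is the missing step, and it is not a small one --- it is the whole reduction.

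The paper sidesteps this obstacle entirely. Instead of building one kernel matrix with a separation gadget, it makes \emph{three} calls to the (hypothetical) trace oracle: on the kernel matrices $K_A$, $K_B$, and $K_{A,B}$ of $A$, $B$, and $A\cup B$ respectively, and observes the exact identity
\[
\bigl(s(K_{A,B}) - s(K_A) - s(K_B)\bigr)/2 \;=\; \sum_{i,j=1}^{n} k(a_i,b_j).
\]
The $A$--$A$ and $B$--$B$ terms appear with the same coefficients on both sides and cancel identically, so there is no need for any geometric separation, no extra coordinates, no identifier codes, and no constant shift $c$ to calibrate. One then only needs the straightforward YES/NO gap you already computed in Step 3: in the NO case the right-hand side is at most $n^2 e^{-Ct}$, in the YES case at least $e^{-C(t-1)}$, and with $C=100\log n$ these differ by a factor $\geq n^{10}/n^2$. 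Three sub-quadratic trace approximations would thus decide BHCP sub-quadratically, contradicting SETH. In short: the cancellation-by-differencing trick replaces the gadget you could not build, and it is the key idea your proposal is missing.
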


\subsection{Neural network ERM problems}
\label{ss:retraining}
We now consider neural networks.
We focus on the problem of optimizing the top layer while keeping lower layers unchanged.
An instance of this problem is transfer learning with large networks that would take a long time and many examples to train from scratch \cite{RASC14}.
We consider neural networks of depth 2, with the sigmoid or ReLU activation function.
Our hardness result holds for a more general class of ``nice'' activation functions $S$ as described later (see Definition~\ref{activation}).

Given $n$ vectors $b_1, \ldots, b_n \in \R^d$ and $n$ weights $\alpha_1, \ldots, \alpha_n \in \R$, consider the function $f : \R^d \to \R$ using a non-linearity $S : \R \rightarrow \R$:
\[
	f(a)\; := \; \sum_{j=1}^n \alpha_j \cdot S(a^\transp b_j) \; .
\]
This function can be implemented as a neural net that has $d$ inputs, $n$ nonlinear activations (units), and one linear output. 

To complete the ERM problem, we also require a loss function.
Our hardness results hold for a large class of ``nice'' loss functions, which includes the hinge loss and the logistic loss.\footnote{In the binary setting we consider, the logistic loss is equivalent to the softmax loss commonly employed in deep learning.}

Given a nice loss function and $m$ input vectors $a_1, \ldots, a_m \in \R^d$ with corresponding labels $y_i$, we consider the following problem:
\begin{equation} \label{opt1}
	\begin{aligned}
		& \underset{\alpha_1, \ldots, \alpha_n \in \R}{\text{minimize}}
		& & \sum_{i=1}^m \loss(y_i,f(a_i)).
	\end{aligned}
\end{equation}

Our main result is captured by the following theorem. For simplicity, we set $m=n$.
\begin{theorem}
\label{t:retraining}
	For any $d=\omega(\log n)$, approximating the optimal value in Equation \eqref{opt1} up to a  multiplicative factor of $1+{1 \over 4n}$ requires almost quadratic time assuming SETH.
\end{theorem}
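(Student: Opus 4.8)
The plan is to reduce from the Orthogonal Vectors problem (OVP), exploiting the fact that the hidden units of $f$ act on the inner products $a^\transp b_j$, so that ``orthogonality'' is precisely the event that switches the activation $S$ between two regimes. Given an OVP instance $A=\{a_1,\dots,a_n\}$, $B=\{b_1,\dots,b_n\}\subseteq\{0,1\}^d$ with $d=\omega(\log n)$, I would build a neural-network ERM instance of the form in Equation~\eqref{opt1} with $n$ hidden units $b'_1,\dots,b'_n$ (one per $B$-vector, plus, if convenient, a bias-like unit whose weight vector is $0$ so its output is the constant $S(0)$), with $m=n$ examples $a'_1,\dots,a'_n$ (one per $A$-vector), ambient dimension $d'=O(d)=\omega(\log n)$, and $\pm1$ labels $y_i$ chosen as below. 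The vectors would be padded and rescaled by a factor $\rho=\omega(\log n)$ so that every inner product $(a'_i)^\transp b'_j$ is either $0$ (exactly when $a_i \perp b_j$) or at least $\rho$; for a nice activation $S$ this forces each hidden output $S((a'_i)^\transp b'_j)$ to be within $\eps=\exp(-\omega(\log^2 n))$ of the constant $S(0)$ on orthogonal pairs and of $s_\infty:=\lim_{t\to\infty}S(t)$ otherwise. For the ReLU, which is unbounded, one instead applies a coordinate shift so that the two relevant inner-product values map to two well-separated outputs; in general one uses only the abstract properties of a ``nice'' activation from Definition~\ref{activation}.

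The labels and the precise embedding would be chosen so that the optimal value $\mathrm{OPT}$ of Equation~\eqref{opt1} obeys the following dichotomy. If the OVP instance has \emph{no} orthogonal pair, then the feature vectors $\phi(a'_i):=\bigl(S((a'_i)^\transp b'_1),\dots,S((a'_i)^\transp b'_n)\bigr)$ are all equal up to $\eps$, so $f$ is, up to $\eps$, a single fixed scalar on all inputs, the conflicting $\pm1$ labels force every example to incur a constant loss, and $\mathrm{OPT}\ge V$ for an explicit $V=\Theta(n)$. If instead the OVP instance \emph{has} an orthogonal pair $(a_{i^\ast},b_{j^\ast})$, then the $j^\ast$-th coordinate of $\phi(a'_{i^\ast})$ differs by $\Theta(1)$ from its value on the non-orthogonal examples, which the optimizer can exploit (by choosing $\alpha_{j^\ast}$ appropriately) to correct that example, achieving $\mathrm{OPT}\le V-\Theta(1)$. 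Since $V=\Theta(n)$, the two cases differ by a multiplicative $1+\Theta(1/n)$ factor; after tuning constants so the slack absorbs the $\eps$ errors, a $(1+\tfrac1{4n})$-approximation to $\mathrm{OPT}$ decides the OVP instance. As OVP requires $n^{2-o(1)}$ time under SETH whenever $d=\omega(\log n)$ \cite{alman2015probabilistic}, and the reduction runs in near-linear time and preserves the instance size up to constants, Equation~\eqref{opt1} cannot be $(1+\tfrac1{4n})$-approximated in strongly sub-quadratic time unless SETH fails.

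The step I expect to be the main obstacle --- the technical heart of the argument --- is the lower bound $\mathrm{OPT}\ge V$ in the no-instance case. Because the weights $\alpha\in\R^n$ are \emph{unconstrained}, a pointwise $\eps$-closeness of the $\phi(a'_i)$ is by itself not enough: one must rule out that the optimizer uses the exponentially small but nonzero deviations of $S$ from $s_\infty$, together with the extra degrees of freedom of having one weight per hidden unit, to separate the $\pm1$ labels with a huge-norm $\alpha$, which would collapse the loss and kill the gap. To overcome this I would make the no-instance feature configuration \emph{robustly} degenerate: by a suitable padding of the OVP instance, and/or by adding a constant number of ``anchor'' examples that, via the convexity and linear growth of the nice loss, penalize large $\|\alpha\|$, one forces that no choice of $\alpha$ can separate the labels by more than $o(1/n)$ on any example. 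One then balances the three scales involved --- the rescaling factor $\rho=\omega(\log n)$, the target approximation $1+\tfrac1{4n}$, and the ambient error $\eps=\exp(-\omega(\log^2 n))$ --- so that $\eps\ll 1/n$ and all $\eps$-perturbations are harmless. The remaining steps (fixing the labels, verifying $\mathrm{OPT}\le V-\Theta(1)$ in the yes-instance by exhibiting an explicit $\alpha$, and checking that everything goes through for each nice loss and nice activation) are routine once this degeneracy is in hand.
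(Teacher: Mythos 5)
Your high-level strategy matches the paper's: reduce from Orthogonal Vectors, encode orthogonality of $(a_i,b_j)$ in the regime of $S(a_i^\transp b_j)$, engineer a $\Theta(1)$ absolute gap between the YES and NO optima on top of a baseline objective of size $\Theta(n)$, and conclude a $1+\Theta(1/n)$ multiplicative gap. You also correctly identify the technical crux: since $\alpha\in\R^n$ is unconstrained, the NO-case lower bound fails unless one rules out that a huge-norm $\alpha$ amplifies the inversely-superpolynomial residuals in the features and collapses the loss.

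However, the specific mechanism you propose for that crux does not work. A \emph{constant} number of ``anchor'' examples can only impose a constant number of scalar constraints on $\alpha$, leaving an $(n-O(1))$-dimensional subspace along which $\|\alpha\|$ can be made arbitrarily large without any anchor paying for it. To control $\|\alpha\|_\infty$ you need $\Theta(n)$ anchor rows, essentially one pair per coordinate $\alpha_j$. The paper realizes this by adjoining \emph{two copies} of an $n\times n$ block $M_2$ with opposite labels, where $M_2$ is (approximately) diagonal with $(M_2)_{i,i}=S(v_1)=n^{-\Theta(K)}$ and off-diagonal entries inversely superpolynomial. The diagonal scale $S(v_1)$ is tuned so that (i) by convexity $l\bigl((M_2\alpha)_i\bigr)+l\bigl(-(M_2\alpha)_i\bigr)\ge 2\,l(0)$ gives a clean $2n\,l(0)$ floor, (ii) any $|\alpha_j|$ exceeding a fixed polynomial makes $(M_2\alpha)_j\approx\alpha_j S(v_1)$ large enough to incur loss $\omega(n)$ on one of the two copies, contradicting the trivial $4n\,l(0)$ upper bound from $\alpha=0$, and (iii) a ``moderate'' $\alpha$ of size $n^{\Theta(K)}$ still keeps $|(M_2\alpha)_i|$ negligible so the floor is nearly tight. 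Constructing such an approximately diagonal $M_2$ within the model also requires a small trick: the inputs for these rows are built from the complemented vectors $\bar b_i$ (after normalizing all $b_i$ to equal weight), not from additional OVP padding. Finally, a smaller point: in your NO-case sketch the $M_1$-type rows should not carry conflicting labels at all; the paper gives them all label $+1$, drives their features to (approximately) the zero vector, and gets $n\,l(0)$ directly, reserving the $\pm1$ conflict for the $M_2$ anchor pairs. Without the linear-in-$n$ anchor construction the NO-case lower bound, and hence the theorem, does not follow from your sketch.
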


\subsection{Hardness of gradient computation}

Finally, we consider the problem of computing the gradient of the loss function for a given set of examples.
We focus on the network architecture from the previous section.
Formally, we obtain the following result:

\begin{theorem}
\label{t:gradient}
Consider the empirical risk in Equation \eqref{opt1} under the following assumptions:
(i) The function $f$ is represented by a neural network with $n$ units, $n \cdot d$ parameters, and the ReLU activation function.
(ii) We have  $d=\omega(\log n)$.
(iii) The loss function is the logistic loss or hinge loss.
Then computing a gradient of the empirical risk for $m$ examples takes at least $O((nm)^{1-o(1)})$ time assuming SETH.
\end{theorem}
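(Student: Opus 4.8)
The plan is to reduce from the Orthogonal Vectors problem of Definition~\ref{OVP} in the unbalanced regime $|A| = m$, $|B| = n$: given $A = \{a_1,\dots,a_m\} \subseteq \{0,1\}^d$ and $B = \{b_1,\dots,b_n\} \subseteq \{0,1\}^d$, we must decide whether $a_i^{\transp} b_j = 0$ for some $i,j$, where the $n$ vectors of $B$ will play the role of the $n$ hidden units and the $m$ vectors of $A$ the role of the $m$ training examples. By~\cite{bringmann2015quadratic} this takes $(nm)^{1-o(1)}$ time under SETH whenever $m = n^{\alpha}$ for a constant $\alpha \in (0,1)$ (and $n^{2-o(1)}$ when $m = n$, by~\cite{alman2015probabilistic}), precisely in the dimension regime $d = \omega(\log n)$ assumed here.

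First I would build the instance. Lift each example to $x_i := (a_i, 1) \in \R^{d+1}$ and each unit vector to $w_j := (-2\, b_j, 1) \in \R^{d+1}$, so that $x_i^{\transp} w_j = 1 - 2\, a_i^{\transp} b_j$; since $a_i^{\transp}b_j$ is a nonnegative integer, this quantity is an odd integer, equal to $1$ exactly when $a_i^{\transp}b_j = 0$ and at most $-1$ otherwise. Writing $S$ for the ReLU nonlinearity, we therefore get $S(x_i^{\transp}w_j) = 1$ if $a_i^{\transp}b_j = 0$ and $S(x_i^{\transp}w_j) = 0$ otherwise, and $S$ is never evaluated at its kink. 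Set all labels $y_i := 1$, take $f(x) := \sum_{j=1}^n \alpha_j\, S(x^{\transp}w_j)$, and ask for the gradient of $\sum_{i=1}^m \loss(y_i, f(x_i))$ at the parameter point $\alpha = 0$ (with the $w_j$ fixed as above).

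Next I would read the OVP answer off the gradient. At $\alpha = 0$ every $f(x_i) = 0$, so $\loss'(y_i, f(x_i)) = \loss'(1,0) =: c$ is a single constant, and $c \neq 0$ for both losses in question ($c = -\tfrac12$ for the logistic loss; $c = -1$ for the hinge loss, whose kink sits at $y_i f(x_i) = 1$, away from $0$). Hence the $j$-th coordinate of the gradient with respect to $\alpha$ equals $c \sum_{i=1}^m S(x_i^{\transp}w_j) = c \cdot |\{\, i : a_i^{\transp}b_j = 0 \,\}|$, which is nonzero iff $b_j$ is orthogonal to some $a_i$; so the gradient vector is nonzero iff the OVP instance is a yes-instance, and scanning it for a nonzero entry costs only $O(n)$ extra time. (If instead the gradient is taken with respect to the first-layer weights $w_j$, the same idea works after fixing $\alpha_j := \tfrac{1}{2n}$, so that $f(x_i) = \tfrac{1}{2n}|\{j : a_i^{\transp}b_j = 0\}| \le \tfrac12$ keeps each loss derivative flat and nonzero, using $S'(x_i^{\transp}w_j) = 1 \iff a_i^{\transp}b_j = 0$ and reading out the last coordinate of $x_i$, which is $1$, as the detector.) The reduction only appends one coordinate and rescales, so it runs in $O((n+m)d)$ time; thus an algorithm computing the gradient in $O((nm)^{1-\delta})$ time for some $\delta > 0$ would decide the constructed OVP instance in $O((nm)^{1-\delta} + (n+m)d)$ time, contradicting SETH.

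The step I expect to be the real obstacle — and the one that forces the choices above — is keeping the loss derivative a known \emph{nonzero} number for \emph{every} example at once while still making a single ReLU output encode orthogonality exactly: no inner product may land on the ReLU kink, and no example's loss derivative may accidentally vanish (which is exactly why the gradient is taken at $\alpha = 0$, resp. at a tiny $\alpha$). The remaining care is routine: checking that the two concrete loss functions have $\loss'(1,0) \neq 0$ with a flat-enough derivative near $0$, that the lifting keeps the instance in the $d = \omega(\log n)$ regime, and that the unbalanced Orthogonal Vectors hardness of~\cite{bringmann2015quadratic} is invoked only within its stated range of $m$ relative to $n$.
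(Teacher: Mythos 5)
Your proposal is correct and takes essentially the same approach as the paper: both reduce from Orthogonal Vectors, both encode $S(a,b)=\max(0,1-2a^\transp b)$ via a shift in the ReLU, and both evaluate the gradient with respect to $\alpha$ at $\alpha = 0$ so that each coordinate $c\cdot\sum_i \mathbf{1}[a_i^\transp b_j=0]$ has the fixed sign of $l'(0)\neq 0$ (the paper then sums the coordinates, you scan for a nonzero one — equivalent detectors). One small slip: for the logistic loss as normalized in the paper, $l'(0)=-\tfrac{1}{2\ln 2}$, not $-\tfrac12$, but this is immaterial since only $l'(0)\neq 0$ is used.
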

In Section \ref{sec:gradient}, we also prove a similar statement for the sigmoid activation function. 

\subsection{Related work}
Recent work has demonstrated conditional quadratic hardness results for many optimization problems over graphs and sequences.
These results include computing diameter in sparse graphs~\cite{roditty2013fast, chechik2014better}, Local Alignment~\cite{abboud2014consequences}, Fr\'echet distance~\cite{bringmann2014walking}, Edit Distance~\cite{backurs2015edit}, Longest Common Subsequence, and Dynamic Time Warping~\cite{abboud2015tight,bringmann2015quadratic}.
As in our paper, the SETH and related assumptions underlie these lower bounds.
To the best of our knowledge, our paper is the first application of this methodology to continuous (as opposed to combinatorial) optimization problems. 

There is a long line of work on the oracle complexity of optimization problems, going back to \cite{NY83}.
We refer the reader to \cite{Nesterov} for these classical results.
The oracle complexity of ERM problems is still subject of active research, e.g., see \cite{AB15,cesa2015complexity,woodworth2016tight,AS16b,AS16a}.
The work closest to ours is \cite{cesa2015complexity}, which gives quadratic time lower bounds for ERM algorithms that access the kernel matrix through an evaluation oracle or a low-rank approximation.

The oracle results are fundamentally different from the lower bounds presented in our paper.
Oracle lower bounds are typically unconditional, but inherently apply only to a limited class of algorithms due to their information-theoretic nature.
Moreover, they do not account for the \emph{cost} of executing the oracle calls, as they merely lower bound their number.
In contrast, our results are conditional (based on the SETH and related assumptions), but apply to \emph{any} algorithm and account for the {\em total} computational cost. 
This significantly broadens the reach of our results.
We show that the hardness is not due to the oracle abstraction but instead inherent in the computational problem.
 
\section{Overview of the hardness proof for kernel SVMs}
\label{s:svm} \label{ss:overview}

Let $A=\{a_1, \ldots, a_n\}\subseteq \{0,1\}^d$ and $B=\{b_1, \ldots, b_n\}\subseteq \{0,1\}^d$ be the two sets of binary vectors from the BHCP instance with $d=\omega(\log n)$.
Our goal is to determine whether there is a close pair of vectors. We show how to solve this BHCP instance by performing {\em three} computations of SVM:
	\begin{enumerate}
		\item We take the first set $A$ of binary vectors, assign label $1$ to all vectors, and solve the corresponding SVM on the $n$ vectors:
			\begin{equation} \label{SVM1}
				\begin{aligned}
					& \underset{\alpha_1, \ldots, \alpha_n \geq 0}{\text{minimize}}
					& & {1 \over 2}\sum_{i,j=1}^n \alpha_i \alpha_j k(a_i,a_j) \\
					& \text{subject to}
					& & \sum_{j=1}^n \alpha_j k(a_i,a_j) \geq 1, \ \ i = 1, \ldots, n.
				\end{aligned}
			\end{equation}
		Note that we do not have $y_i$ in the expressions because all labels are $1$.
		\item We take the second set $B$ of binary vectors, assign label $-1$ to all vectors, and solve the corresponding SVM on the $n$ vectors:
			\begin{equation} \label{SVM2}
				\begin{aligned}
					& \underset{\beta_1, \ldots, \beta_n \geq 0}{\text{minimize}}
					& & {1 \over 2}\sum_{i,j=1}^n \beta_i \beta_j k(b_i,b_j) \\
					& \text{subject to}
					& & - \sum_{j=1}^n \beta_j k(b_i,b_j) \leq -1, \ \ i = 1, \ldots, n.
				\end{aligned}
			\end{equation}
		\item We take both sets $A$ and $B$ of binary vectors, assign label $1$ to all vectors from the first set $A$ and label $-1$ to all vectors from the second set $B$. We then solve the corresponding SVM on the $2n$ vectors:
			\begin{equation} \label{SVM3}
				\begin{aligned}
          \iftoggle{singlecolumn}{
					& \underset{\substack{\alpha_1, \ldots, \alpha_n \geq 0\\ \beta_1, \ldots, \beta_n \geq 0}}{\text{minimize}}
					& & {1 \over 2}\sum_{i,j=1}^n \alpha_i \alpha_j k(a_i,a_j) + {1 \over 2}\sum_{i,j=1}^n \beta_i \beta_j k(b_i,b_j) - \sum_{i,j=1}^n \alpha_i \beta_j k(a_i,b_j)\\
					& \text{subject to}
					& & \sum_{j=1}^n \alpha_j k(a_i,a_j) - \sum_{j=1}^n \beta_j k(a_i,b_j) \geq 1, \quad i = 1, \ldots, n \; ,\\
					&&& - \sum_{j=1}^n \beta_j k(b_i,b_j)  + \sum_{j=1}^n \alpha_j k(b_i,a_j)\leq -1, \quad i = 1, \ldots, n \;.
          }{
					& \underset{\substack{\alpha_1, \ldots, \alpha_n \geq 0\\ \beta_1, \ldots, \beta_n \geq 0}}{\text{minimize}}
					& & {1 \over 2}\sum_{i,j=1}^n \alpha_i \alpha_j k(a_i,a_j) + {1 \over 2}\sum_{i,j=1}^n \beta_i \beta_j k(b_i,b_j) \\ 
					&&& \hspace{3.0 cm} - \sum_{i,j=1}^n \alpha_i \beta_j k(a_i,b_j)\\
					& \text{subject to}
					& & \sum_{j=1}^n \alpha_j k(a_i,a_j) - \sum_{j=1}^n \beta_j k(a_i,b_j) \geq 1, \\
					&&& \hspace{4 cm} i = 1, \ldots, n,\\
					&&& - \sum_{j=1}^n \beta_j k(b_i,b_j)  + \sum_{j=1}^n \alpha_j k(b_i,a_j)\leq -1,\\ 
					&&& \hspace{4 cm} i = 1, \ldots, n.
          }
				\end{aligned}
			\end{equation}
	\end{enumerate}
	
	\paragraph{Intuition behind the construction.} To show a reduction from the BHCP problem to SVM computation, we have to consider two cases:
	\begin{itemize}
		\item The YES case of the BHCP problem when there are two vectors that are close in Hamming distance. That is, there exist $a_i \in A$ and $b_j \in B$ such that $\Hamming(a_i,b_j)<t$.
		\item The NO case of the BHCP problem when there is no close pair of vectors. That is, for all $a_i \in A$ and $b_j \in B$, we have $\Hamming(a_i,b_j)\geq t$.
	\end{itemize}

We show that we can distinguish between these two cases by comparing the objective value of the first two SVM instances above to the objective value of the third.

\paragraph{Intuition for the NO case.} We have $\Hamming(a_i,b_j)\geq t$ for all $a_i \in A$ and $b_j \in B$.
The Gaussian kernel then gives the inequality
	\begin{align*}
		k(a_i,b_j) & = \exp(-100\log n\cdot \|a_i-b_j\|_2^2)\\
		& \leq \exp(-100\log n \cdot t)
	\end{align*}
	for all $a_i \in A$ and $b_j \in B$.
  This means that the value $k(a_i,b_j)$ is very small. For simplicity, assume that it is equal to $0$, i.e., $k(a_i,b_j)=0$ for all $a_i \in A$ and $b_j \in B$.
	
 	Consider the third SVM \eqref{SVM3}.
  It contains three terms involving $k(a_i,b_j)$: the third term in the objective function, the second term in the inequalities of the first type, and the second term in the inequalities of the second type. We assumed that these terms are equal to $0$ and we observe that the rest of the third SVM is equal to the sum of the first SVM \eqref{SVM1} and the second SVM \eqref{SVM2}. Thus we expect that the optimal value of the third SVM is approximately equal to the sum of the optimal values of the first and the second SVMs. If we denote the optimal value of the first SVM \eqref{SVM1} by $\val(A)$, the optimal value of the second SVM \eqref{SVM2} by $\val(B)$, and the optimal value of the third SVM \eqref{SVM3} by $\val(A,B)$, then we can express our intuition in terms of the approximate equality
	$$
		\val(A,B)\approx\val(A)+\val(B) \; .
	$$
	
	\paragraph{Intuition for the YES case} In this case, there is a close pair of vectors $a_i \in A$ and $b_j \in B$ such that $\Hamming(a_i,b_j)\leq t-1$. Since we are using the Gaussian kernel we have the following inequality for this pair of vectors:
	\begin{align*}
		k(a_i,b_j) & = \exp(-100\log n\cdot \|a_i-b_j\|_2^2)\\
		& \geq \exp(-100\log n \cdot (t-1)) \; .
	\end{align*}
	
	We therefore have  a large summand in each of the three terms from the above discussion. Thus the three terms do not (approximately) disappear and there is no reason for us to expect that the approximate equality holds. We can thus expect
	$$
		\val(A,B)\not\approx\val(A)+\val(B) \; .
	$$
	
Thus, by computing $\val(A,B)$ and comparing it to $\val(A)+\val(B)$ we can distinguish between the YES and NO instances of BHCP. This completes the reduction. 
The full proofs are given in Section~\ref{s:hardness}.
 
\section{Hardness proof for Kernel PCA}
\label{s:kpca}
In this section, we present the full proof of quadratic hardness for Kernel PCA. It will also be helpful for Kernel Ridge Regression in the next section.

Given a matrix $X$, we denote its trace (the sum of the diagonal entries) by $\tr(X)$ and  the total sum of its entries by $s(X)$. In the context of the matrix $K'$ defining our problem, we have the following equality:
\begin{align*}
	\tr(K')& =\tr((I-1_n)K(I-1_n))\\
	& =\tr(K(I-1_n)^2)=\tr(K(I-1_n))\\
	& =\tr(K)-\tr(K1_n)=n-s(K)/n \; .
\end{align*}
Since the sum of the eigenvalues is equal to the trace of the matrix and $\tr(K')=n-s(K)/n$,
it is sufficient to show hardness for computing $s(K)$. The following lemma completes the proof of the theorem.

\begin{lemma} \label{sum_hardness}
	Computing $s(K)$ within multiplicative error $1+\eps$ for $\eps=\exp(-\omega(\log^2n))$ requires almost quadratic time assuming SETH. 
\end{lemma}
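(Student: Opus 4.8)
The plan is to reduce from the Bichromatic Hamming Close Pair problem, following the same high-level strategy as the kernel SVM overview: given a BHCP instance $(A,B,t)$ with $A,B \subseteq \{0,1\}^d$ of size $n$ and $d = \omega(\log n)$, I will build a point set on which the quantity $s(K)$ reveals whether a close pair exists. The natural construction is to take $K$ to be the Gaussian kernel matrix on the union $A \cup B$. Then $s(K)$ decomposes as $\sum_{i,j} k(a_i,a_j) + \sum_{i,j} k(b_i,b_j) + 2\sum_{i,j} k(a_i,b_j)$. The first two terms depend only on $A$ and only on $B$ respectively, and — crucially — each can be computed in near-linear time on its own, because all pairwise distances inside a single binary set $A$ (resp. $B$) take only $O(d)$ distinct integer values, so $\sum_{i,j} k(a_i,a_j)$ is a weighted sum over the distance histogram of $A$, and for $d = \omega(\log n)$ one can compute these histograms via, e.g., bit-tricks or direct counting in $n \cdot \mathrm{poly}(d) = n^{1+o(1)}$ time (alternatively, one simply notes these "self" sums need not be computed at all if the reduction is set up to compare the union instance against two separate instances, exactly as in the SVM overview). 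The whole game therefore reduces to detecting whether the cross term $\mathrm{cross} := \sum_{i,j} k(a_i,b_j)$ is "small" (NO case) or "not small" (YES case).

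The key quantitative step: set $C = \omega(\log n)$; say $C = 100 \log n$ for concreteness. In the NO case every cross pair has $\Hamming(a_i,b_j) \ge t$, so each term satisfies $k(a_i,b_j) = \exp(-C\|a_i-b_j\|_2^2) \le \exp(-C t)$, hence $\mathrm{cross} \le n^2 \exp(-Ct)$. In the YES case there is a pair with $\Hamming(a_i,b_j) \le t-1$, so $\mathrm{cross} \ge \exp(-C(t-1)) = e^{C} \cdot \exp(-Ct)$. Since $e^C = n^{100} \gg n^2$, there is a multiplicative gap of factor $n^{98}$ (or more) between the two cases in the value of $\mathrm{cross}$. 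To turn this into a statement about $s(K)$ rather than about $\mathrm{cross}$ in isolation, I would follow the SVM template: compute $s(K)$ on the union, compute (in near-linear time) the two "self" sums $s(K_A)$ and $s(K_B)$, and look at the residual $s(K) - s(K_A) - s(K_B) = 2\,\mathrm{cross}$. The gap above then lets a $(1+\eps)$-multiplicative approximation of $s(K)$ distinguish YES from NO, provided $\eps$ is small enough that the additive error $\eps \cdot s(K)$ is below the threshold — and since $s(K) \le 3n^2$ while the YES/NO threshold sits at roughly $\exp(-Ct) \cdot (\text{poly gap})$, one checks that $\eps = \exp(-\omega(\log^2 n))$ suffices because $t \le d = \omega(\log n)$, so $\exp(-Ct) = \exp(-\omega(\log^2 n))$ dominates any fixed power of $n$ in the comparison. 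A BHCP algorithm running in $n^{2-\delta}$ time would then follow from a sub-quadratic algorithm for approximating $s(K)$, contradicting SETH by \cite{alman2015probabilistic}.

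The main obstacle I anticipate is the bookkeeping around \emph{scale}: the NO-case bound $\mathrm{cross} \le n^2 \exp(-Ct)$ must be compared not against $\mathrm{cross}$ in the YES case but against the full $s(K)$, whose dominant contribution comes from the diagonal and near-diagonal self-terms (which are $\Theta(n)$ to $\Theta(n^2)$ in magnitude, i.e. astronomically larger than $\exp(-Ct)$). So the multiplicative-approximation guarantee on $s(K)$ is, a priori, far too weak to see the cross term directly; this is exactly why the reduction must isolate the cross term by subtracting the separately-and-exactly(-or near-exactly)-computable self sums, and why one needs $\eps$ as tiny as $\exp(-\omega(\log^2 n))$. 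A secondary technical point is justifying that $s(K_A)$ and $s(K_B)$ can be obtained in $n^{1+o(1)}$ time: this rests on the observation that within $\{0,1\}^d$ only $O(d)$ distinct squared distances occur, so each self-sum is $\sum_{h=0}^{d} (\#\{(i,j): \Hamming = h\}) \cdot e^{-Ch}$, and the histogram is computable fast enough for the parameter regime $d = \omega(\log n)$ we care about — or, more cleanly, one can sidestep exact computation entirely by padding $A$ and $B$ to have a prescribed internal distance structure and invoking the three-instance comparison verbatim from Section~\ref{s:svm}. I would check that the padding used to normalize the BHCP instance (adding coordinates so all $b_i$ have equal weight, as noted after Definition~\ref{BHCP}) is compatible with all of this, and then assemble the pieces.
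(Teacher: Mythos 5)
Your main route --- compute $s(K_A)$, $s(K_B)$, $s(K_{A,B})$ via three calls to the hypothetical $(1+\eps)$-approximator, observe that $\bigl(s(K_{A,B})-s(K_A)-s(K_B)\bigr)/2 = \sum_{i,j} k(a_i,b_j)$, and separate YES from NO using the exponential gap between $\exp(-C(t-1))$ and $n^2\exp(-Ct)$ --- is exactly the paper's proof, including the precision bookkeeping that converts $s(K)\le O(n^2)$ plus a $(1+\eps)$ multiplicative guarantee into the required $\exp(-\omega(\log^2 n))$ additive accuracy.

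One side remark you make is not justified and should be dropped: you claim the self-sums $s(K_A)$ and $s(K_B)$ can be computed exactly in $n^{1+o(1)}$ time by building a Hamming-distance histogram of $A$ (resp.\ $B$), because only $O(d)$ distinct distances occur. That observation about the range is correct, but it does not give a fast algorithm --- computing the distance histogram of $n$ points in $\{0,1\}^d$ for $d=\omega(\log n)$ is itself a quadratic-time problem under SETH (it subsumes monochromatic close-pair counting), so there is no known bit-trick that does it in $n\cdot\mathrm{poly}(d)$ time. Fortunately you never actually rely on this: the three-oracle-call comparison you also describe sidesteps it entirely, as does the paper. Just delete the histogram claim (and the padding alternative, which is also unnecessary) and the argument is exactly the published one.
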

\begin{proof}
	As for SVMs, we will reduce the BHCP problem to the computation of $s(K)$. Let $A$ and $B$ be the two sets of $n$ binary vectors coming from an instance of the BHCP problem. Let $K_A, K_B \in \R^{n \times n}$ be the kernel matrices corresponding to the sets $A$ and $B$, respectively. Let $K_{A,B} \in \R^{2n \times 2n}$ be the kernel matrix corresponding to the set $A \cup B$. We observe that
	\begin{align*}
		s:=& (s(K_{A,B})-s(K_A)-s(K_B))/2\\
		=& \sum_{i,j=1}^n k(a_i,b_j)\\
		=& \sum_{i,j=1}^n \exp(-C||a_i-b_j||_2^2).
	\end{align*}
	
	Now we consider two cases.
	\paragraph{Case 1.} There are no close pairs, that is, for all $i,j=1, \ldots, n$ we have $||a_i-b_j||_2^2\geq t$ and $\exp(-C||a_i-b_j||_2^2)\leq \exp(-Ct)=:\delta$. Then $s\leq n^2\delta$.
	
	\paragraph{Case 2.} There is a close pair. That is, $||a_{i'}-b_{j'}||_2^2\leq t-1$ for some $i',j'$. This implies that $\exp(-C||a_{i'}-b_{j'}||_2^2)\geq \exp(-C(t-1))=:\Delta$. Thus, $s\geq \Delta$.
	
	Since $C=100\log n$, we have that $\Delta\geq n^{10}\delta$ and we can distinguish the two cases.
	
	\paragraph{Precision.} To distinguish $s\geq \Delta$ from $s\leq n^2 \delta$, it is sufficient that $\Delta\geq 2n^2\delta$. This holds for $C=100\log n$. The sufficient additive precision is $\exp(-Cd)=\exp(-\omega(\log^2n))$. Since $s(K)\leq O(n^{2})$ for any Gaussian kernel matrix $K$, we also get that $(1+\eps)$ multiplicative approximation is sufficient to distinguish the cases for any $\eps=\exp(-\omega(\log^2n))$.
\end{proof}
 
\section{Overview of the hardness proof for Kernel Ridge Regression}
\label{s:krr}
The hardness proof for Kernel Ridge Regression is somewhat technical, so we only outline it here.
As described in Section \ref{sec:contributions}, our goal is to prove that computing the sum of the entries in the matrix $K^{-1}$ requires quadratic time.
The main idea is to leverage the hardness of computing the sum $\sum_{i,j=1}^n k(a_i,b_j)$, as shown in Lemma~\ref{sum_hardness}. 

To this end, we show that the following approximate equality holds:
	\begin{align*}
		(s(K_A^{-1})+s(K_B^{-1})-s(K_{A,B}^{-1}))/2
		\approx \sum_{i,j=1}^n k(a_i,b_j).
	\end{align*}
where we use the notation $K_{A,B}, K_A, K_B$ as in the previous section.  This allows us to conclude that a sufficiently accurate approximation to $s(K^{-1})$ for a kernel matrix $K$ gives us the solution to the BHCP problem.
The key observation for proving the above approximate equality is the following: if a matrix has no large off-diagonal element, i.e., it is an ``almost identity'', then so is its inverse. 
 
\section{Overview of the hardness proof for training the final layer of a neural network}
\label{s:retraining}

We start by formally defining the class of ``nice'' loss functions and ``nice'' activation functions. 

\begin{definition} \label{nice}
	For a label $y \in \{-1,1\}$ and a prediction $w \in \R$, we call the loss function $\loss(y,w):\{-1,1\} \times \R \to \R_{\geq 0}$ \emph{nice} if the following three properties hold:
	\begin{itemize}
		\item $\loss(y,w)=l(yw)$ for some convex function $l:\R \to \R_{\geq 0}$.
    \item For some sufficiently large constant $K>0$, we have that (i) $l(x)\leq o(1)$ for all $x\geq n^K$, (ii) $l(x)\geq \omega(n)$ for all $x \leq -n^K$, and (iii) $l(x) = l(0) \pm o(1/n)$ for all $x \in \pm O(n^{-K})$.
		\item $l(0)>0$ is some constant strictly larger than $0$.
	\end{itemize}
\end{definition}

We note that the hinge loss function $\loss(y,x)=\max(0,1-y \cdot x)$ and the logistic loss function $\loss(y,x)={1 \over \ln 2}\ln\left(1+e^{-y\cdot x}\right)$ are nice loss functions according to the above definition.

\begin{definition} \label{activation}
	A  \emph{non-decreasing} activation functions $S:\R\to\R_{\geq 0}$ is ``nice'' if it satisfies the following property:
	for all sufficiently large constants $T>0$ there exist $v_0>v_1>v_2$ such that
	\begin{itemize}
		\item $S(v_0)=\Theta(1)$;
		\item $S(v_1)=1/n^T$;
		\item $S(v_2)=1/n^{\omega(1)}$;
		\item $v_1=(v_0+v_2)/2$.
	\end{itemize}
\end{definition}

The ReLU activation $S(z)=\max(0,z)$ satisfies these properties since we can choose $v_0=1$, $v_1=1/n^T$, and $v_2=-1+2/n^T$.
For the sigmoid function $S(z)={1 \over 1+e^{-z}}$, we can choose $v_1=-\log(n^T-1)$, $v_0=v_1+C$, and $v_2=v_1-C$ for some $C=\omega(\log n)$. In the rest of the proof we set $T=1000K$, where $K$ is the constant from Definition~\ref{nice}.

We now describe the proof of Theorem~\ref{t:retraining}. 
We use the notation $\alpha:=(\alpha_1, \ldots, \alpha_n)^T$.
Invoking the first property from Definition \ref{nice}, we observe that the optimization problem \eqref{opt1} is equivalent to the following optimization problem:
\begin{equation} \label{optM}
	\begin{aligned}
		& \underset{\alpha \in \R^n}{\text{minimize}}
		& & \sum_{i=1}^m l(y_i \cdot (M\alpha)_i),
	\end{aligned}
\end{equation}
where $M \in \R^{m \times n}$ is the matrix defined as $M_{i,j}:=S(a_i^\transp b_j)$ for $i=1, \ldots, m$ and $j=1, \ldots n$.
For the rest of the section we will use $m=\Theta(n)$.

Let $A=\{a_1, \ldots, a_n\} \subseteq \{0,1\}^d$ and $B=\{b_1, \ldots, b_n\} \subseteq \{0,1\}^d$ with $d=\omega(\log n)$ be the input to the Orthogonal Vectors problem. To show hardness we construct matrix a $M$ as combination of $3$ smaller matrices:
$$
	M=
		\left[
			\begin{array}{c}
				M_1 \\
				M_2 \\
				M_2
			\end{array}
		\right].
$$
Both matrices $M_1, M_2 \in \R^{n \times n}$ are of size $n \times n$.
Thus the number of rows of $M$ is $m=3n$. 

We select the input examples and weights so that the constructed matrices have the following properties:
\begin{itemize}
  \item $M_1$: if two vectors $a_i$ and $b_j$ are orthogonal, then the corresponding entry $(M_1)_{i,j}=S(v_0)=\Theta(1)$ and otherwise $(M_1)_{i,j}\approx 0$.\footnote{We write $x \approx y$ if $x=y$ up to an inversely superpolynomial additive factor, i.e., $|x-y|\leq n^{-\omega(1)}$.}
\item $M_2$: $(M_2)_{i,i}=S(v_1)=1/n^{1000K}$ and $(M_2)_{i,j}  \approx 0$ for all $i \neq j$ 
\end{itemize}

To complete the description of the optimization problem \eqref{optM}, we assign labels to the inputs corresponding to the rows of the matrix $M$.
We assign label $1$ to all inputs corresponding to rows of the matrix $M_1$ and the first copy of the matrix $M_2$.
We assign label $-1$ to all remaining rows of the matrix $M$ corresponding to the second copy of matrix $M_2$.
	
The proof of the theorem is completed by the following two lemmas.
	
	\begin{lemma} \label{upper_bound}
		If there is a pair of orthogonal vectors, then the optimal value of \eqref{optM} is upper bounded by $(3n-1)\cdot l(0)+o(1)$.
	\end{lemma}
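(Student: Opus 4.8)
# Proof Proposal for Lemma~\ref{upper_bound}

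The plan is to exhibit an explicit feasible assignment of the weights $\alpha_1, \ldots, \alpha_n$ that achieves the claimed objective value, using the orthogonal pair to ``cancel out'' the one troublesome constraint. First I would recall the structure of $M$: the rows split into three blocks, the first block governed by $M_1$ (which detects orthogonality) and the two identical $M_2$ blocks, with the first $M_2$ block labeled $+1$ and the second labeled $-1$. Since $M_2$ is an ``almost identity'' scaled by $S(v_1) = 1/n^{1000K}$, the rows coming from the two $M_2$ copies are, up to inversely superpolynomial error, $(M\alpha)_i \approx \alpha_i/n^{1000K}$ for the first copy and $(M\alpha)_i \approx \alpha_i / n^{1000K}$ for the second copy as well (same index, opposite label). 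The key observation is that if we set \emph{all} weights $\alpha_j$ to be of moderate magnitude, then these $2n$ rows each contribute $l(0) \pm o(1/n)$ by property (iii) of Definition~\ref{nice} (since $y_i (M\alpha)_i \in \pm O(n^{-K})$), summing to $2n \cdot l(0) + o(1)$.

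Next I would handle the $n$ rows from the $M_1$ block. Let $(a_{i^\star}, b_{j^\star})$ be an orthogonal pair. The idea is to choose $\alpha_{j^\star}$ large and positive — say $\alpha_{j^\star} = n^{2000K}$ or some polynomially large value — and $\alpha_j = 0$ (or something negligible) for all other $j$. Then for the row $i^\star$ of $M_1$, we have $(M_1 \alpha)_{i^\star} = \alpha_{j^\star} \cdot S(v_0) + \sum_{j \neq j^\star} \alpha_j \cdot (M_1)_{i^\star, j} = \Theta(n^{2000K})$, which is $\geq n^K$, so by property (i) the loss on that row is $o(1)$. For the other $n-1$ rows $i \neq i^\star$ of $M_1$: the entry $(M_1)_{i, j^\star}$ is either $\Theta(1)$ (if $a_i \perp b_{j^\star}$) or $\approx 0$; in the former case the large weight again drives the argument above $n^K$, making the loss $o(1)$; in the latter case $(M_1 \alpha)_i \approx 0$, giving loss $l(0) + o(1/n)$. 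So at most $n-1$ of the $M_1$ rows contribute $l(0)$, and the rest contribute $o(1)$ each.

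Finally I would recheck the $M_2$ rows under this specific choice. With $\alpha_{j^\star} = n^{2000K}$ and the other weights zero, row $j^\star$ of each $M_2$ block has argument $y_i \cdot (M_2\alpha)_{j^\star} \approx y_i \cdot n^{2000K}/n^{1000K} = \pm n^{1000K}$; so on the $+1$-labeled copy the loss is $o(1)$ (argument $\geq n^K$) and on the $-1$-labeled copy the loss is $\omega(n)$ — which would be a disaster. So the naive choice fails, and I must be more careful: I would instead take $\alpha_{j^\star}$ large but also adjust so that the $-1$-labeled $M_2$ row $j^\star$ is not catastrophic. The clean fix is to make $\alpha_{j^\star}$ positive but only \emph{moderately} large — large enough to exceed $n^K$ after multiplication by $S(v_0) = \Theta(1)$ in the $M_1$ block (so $\alpha_{j^\star} \geq n^K$ suffices there) yet small enough that $\alpha_{j^\star}/n^{1000K} \leq n^{-K}$ in the $M_2$ blocks (so $\alpha_{j^\star} \leq n^{1000K - K}$). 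Since $K < 1000K - K$, the window $[n^K, n^{999K}]$ is nonempty; pick $\alpha_{j^\star} = n^{500K}$, say. Then: the $M_1$ row $i^\star$ has loss $o(1)$; the other $M_1$ rows have loss $l(0) + o(1/n)$ (there are at most $n-1$ of them); all $2n$ rows of the two $M_2$ copies have argument in $\pm O(n^{-K})$ hence loss $l(0) \pm o(1/n)$. Summing: $(n-1)\cdot l(0) + 2n \cdot l(0) + o(1) = (3n-1)\cdot l(0) + o(1)$, which is exactly the claimed bound.

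The main obstacle, as the above makes clear, is the tension between needing $\alpha_{j^\star}$ large enough to saturate the $\Theta(1)$-scale entries of $M_1$ into the flat region of $l$, while keeping it small enough that the $1/n^{1000K}$-scale entries of $M_2$ keep the $-1$-labeled rows near the harmless value $l(0)$ rather than in the $\omega(n)$ blow-up region; verifying this window is nonempty is precisely why the definition of ``nice'' activation uses $v_1 = 1/n^{1000K}$ with $1000K \gg K$, and why $T = 1000K$ was chosen. A secondary point to be careful about is that the $\approx 0$ entries of $M_1$ and the off-diagonal entries of $M_2$ contribute an aggregate error of at most $n \cdot n^{500K} \cdot n^{-\omega(1)} = n^{-\omega(1)}$ to each row's argument, which is absorbed into the $o(1/n)$ slack of property (iii); I would state this error bookkeeping explicitly but not belabor it.
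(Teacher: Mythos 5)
Your proposal is correct and reaches the same bound, but chooses a different feasible $\alpha$ than the paper does. The paper sets \emph{all} $\alpha_i = n^{100K}$ — a uniform assignment that does not require identifying the orthogonal pair — and verifies that $n^{100K}$ lands inside the same ``window'' you describe: large enough that any $M_1$ row with an orthogonal partner gets argument $\Omega(n^{100K}) \geq n^K$ (so loss $o(1)$ by property (i) of Definition~\ref{nice}), yet small enough that every $M_2$ row has argument $O(n^{-800K}) = O(n^{-K})$ (so loss $l(0) \pm o(1/n)$ by property (iii)). You instead make the assignment sparse, setting $\alpha_{j^\star} = n^{500K}$ only at the column of the orthogonal pair and zeroing the rest. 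Both choices work; the paper's is marginally cleaner because it is oblivious to which pair is orthogonal and treats every row symmetrically, while your detour through the failing $n^{2000K}$ choice usefully makes explicit the tension — and the role of $T = 1000K$ in creating a wide polynomial window — that the paper leaves implicit.

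One small bookkeeping remark: with your sparse $\alpha$, each row's argument is a single term $\alpha_{j^\star} M_{i,j^\star}$, so the ``aggregate error over $n$ off-diagonal entries'' you mention at the end does not actually arise in your construction; it is the relevant concern for the paper's uniform choice, where $(M_2\alpha)_i$ sums $n$ terms. This does not affect correctness, but the error estimate you quote is tailored to a different assignment than the one you used.
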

	
	\begin{lemma} \label{lower_bound}
		If there is no pair of orthogonal vectors, then the optimal value of \eqref{optM} is lower bounded by $3n\cdot l(0)-o(1)$.
	\end{lemma}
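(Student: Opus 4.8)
The plan is to establish the two lemmas separately, both by exhibiting (or lower-bounding over) explicit choices of the weight vector $\alpha$, and then observing that together they distinguish the YES and NO cases of Orthogonal Vectors with a gap of roughly $l(0)$, which is enough for the claimed $(1+\tfrac{1}{4n})$ multiplicative hardness since the optimum is $\Theta(n\cdot l(0))$.

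For Lemma~\ref{upper_bound} (the YES case), suppose $a_{i^*}$ and $b_{j^*}$ are orthogonal. I would choose $\alpha$ to be a large multiple of the standard basis vector $e_{j^*}$, say $\alpha = \gamma\, e_{j^*}$ for $\gamma$ a suitable polynomial in $n$ (recall $a_i^\transp b_j$ is an integer in $\{0,\dots,d\}$, so $S(a_i^\transp b_j)$ is either $S(v_0)=\Theta(1)$ when the inner product is $0$ or at most $S(v_1)=1/n^{1000K}$ — wait, more precisely the construction gives $(M_1)_{i,j}\approx 0$ off the orthogonal pairs — when the inner product is positive). Then $(M\alpha)_i = \gamma M_{i,j^*}$. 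For the row of $M_1$ indexed by $i^*$ we get $(M\alpha)_{i^*} = \gamma\cdot\Theta(1)$, which with label $+1$ and $\gamma\geq n^K$ contributes $l(\gamma\cdot\Theta(1)) \leq o(1)$ by property (i) of niceness. For every other row of $M_1$, the entry is $\approx 0$, so $(M\alpha)_i \approx 0$ provided $\gamma$ is only inverse-superpolynomially times... — here I must be careful: $\gamma\cdot n^{-\omega(1)}$ must stay in the $\pm O(n^{-K})$ window, which holds if $\gamma$ is polynomially bounded; then each such row contributes $l(0)\pm o(1/n)$. For the two copies of $M_2$, the $j^*$-th column has diagonal entry $1/n^{1000K}$ and off-diagonal entries $\approx 0$, so $(M\alpha)_i = \gamma/n^{1000K}$ in row $i=j^*$ of each copy and $\approx\gamma\cdot n^{-\omega(1)}$ elsewhere; choosing $\gamma \ll n^{1000K - K}$ (e.g.\ $\gamma = n^{100K}$) keeps all these arguments in the $\pm O(n^{-K})$ window, so each of the $2n$ rows of the $M_2$-copies also contributes $l(0)\pm o(1/n)$. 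Summing: one row contributes $o(1)$, the remaining $3n-1$ rows contribute $l(0)\pm o(1/n)$ each, for a total of at most $(3n-1)l(0) + o(1)$.

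For Lemma~\ref{lower_bound} (the NO case), I would argue that for \emph{any} $\alpha$ the objective is at least $3n\cdot l(0) - o(1)$. The idea is a convexity/averaging argument exploiting the two identical copies of $M_2$ with opposite labels. Fix any $\alpha$ and let $z_i := (M_2\alpha)_i$. In the first copy (label $+1$) row $i$ contributes $l(z_i)$; in the second copy (label $-1$) it contributes $l(-z_i)$. By convexity of $l$, $l(z_i) + l(-z_i) \geq 2\,l(0)$, so the $2n$ rows of the $M_2$-block contribute at least $2n\cdot l(0)$ regardless of $\alpha$. For the $n$ rows of $M_1$ (all label $+1$), I need each to contribute at least $l(0) - o(1/n)$. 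Since there is no orthogonal pair, every entry of $M_1$ is $\approx 0$; but $\alpha$ could be enormous, so I cannot directly conclude $(M_1\alpha)_i\approx 0$. The resolution is a case split: either $\|\alpha\|$ is polynomially bounded, in which case $(M_1\alpha)_i = \sum_j \alpha_j\cdot n^{-\omega(1)}$ is still inverse-superpolynomial and property (iii) gives $l((M_1\alpha)_i)\geq l(0)-o(1/n)$; or some $|\alpha_j|$ is superpolynomially large, in which case I use the $-1$-labelled copy of $M_2$ together with property (ii): the diagonal term $(M_2)_{j,j}\alpha_j = \alpha_j/n^{1000K}$ can be made to have magnitude $\geq n^K$, and if its sign makes the argument in a $\pm1$-labelled copy $\leq -n^K$ then that single row already contributes $\omega(n) \geq 3n\cdot l(0)$, dominating everything. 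More carefully, one takes $j$ with $|\alpha_j|$ maximal; if $(M_2\alpha)_j$ is dominated by its diagonal term (which holds because off-diagonals are $n^{-\omega(1)}$ and there are only $n$ of them) and is large and negative, use the second copy; if large and positive, the \emph{first} copy has $l((M_2\alpha)_j)$ small but that doesn't hurt — instead note by convexity $l(z_j)+l(-z_j)\geq \max(l(z_j),l(-z_j)) + l(0) \geq \omega(n)$. Either way the total is $\geq 3n\cdot l(0) - o(1)$.

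The main obstacle I anticipate is precisely this second case in Lemma~\ref{lower_bound}: ruling out the possibility that a cleverly chosen huge $\alpha$ drives the $M_1$ rows to large negative arguments (where $l$ is tiny, since $l$ is decreasing in $yw$ with $y=+1$) without paying a correspondingly large penalty somewhere. The key structural fact making this work is that $M_2$ is (approximately) $1/n^{1000K}$ times the identity, so the coordinates of $\alpha$ are ``read off'' individually by the $M_2$-rows with \emph{both} signs, and property (ii) of the nice loss ($l(x)\geq\omega(n)$ for $x\leq -n^K$) ensures that any coordinate of $\alpha$ large enough to meaningfully perturb an $M_1$-row (which would need $|\alpha_j|$ at least roughly $n^{\omega(1)}$ to overcome the $n^{-\omega(1)}$ entries of $M_1$) is far more than large enough to trigger the $\omega(n)$ penalty in one of the $M_2$-copies. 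Pinning down the quantifiers — exactly how ``nice'' the functions need to be, and that $T=1000K$ with $K$ from Definition~\ref{nice} gives enough slack between the $M_1$ scale, the $M_2$ scale, and the loss thresholds $\pm n^K$ — is the bookkeeping that the full proof in the appendix must handle; I would organize it by first reducing to the case $\max_j|\alpha_j|\leq n^{500K}$ (say) via the penalty argument above, and then doing the straightforward estimate in that bounded regime.
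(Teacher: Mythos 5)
Your proposal follows the paper's proof closely and is essentially correct: convexity of $l$ gives the $\geq 2n\cdot l(0)$ baseline from the two opposite-labelled copies of $M_2$; a penalty argument via property~(ii) of Definition~\ref{nice}, reading off the coordinates of $\alpha$ through the (near-)diagonal matrix $M_2$, rules out superpolynomially large entries of $\alpha$ (the paper fixes the threshold at $n^{10^6 K}$ and contradicts the trivial upper bound on the optimum obtained from $\alpha=0$, while you phrase it as a case split valid for every $\alpha$ --- these are equivalent); and property~(iii) then handles the $M_1$ rows in the bounded regime. One slip worth flagging: the inequality you attribute to convexity, $l(z_j)+l(-z_j)\geq \max(l(z_j),l(-z_j))+l(0)$, is actually false --- e.g.\ for the hinge loss at $z\geq 1$ the left side is $1+z$ and the right side is $2+z$. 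Fortunately it is superfluous: since $l\geq 0$, one has $l(z_j)+l(-z_j)\geq\max(l(z_j),l(-z_j))$, and once $|z_j|\geq n^K$ property~(ii) already forces $\max(l(z_j),l(-z_j))\geq \omega(n)$, which is all you need.
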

	
	The intuition behind the proofs of the two lemmas is as follows.
  To show Lemma~\ref{upper_bound}, it suffices to demonstrate a feasible vector $\alpha$. We achieve this by setting each entry of $\alpha$ to a ``moderately large'' number. In this way we ensure that:
	\begin{itemize}
	\item At most $n-1$ entries of $M_1 \alpha$ are close to $0$, each contributing at most $l(0)+o(1/n)$ to the loss.
	\item The remaining entries of $M_1 \alpha$ are positive and ``large'', each contributing only $o(1)$ to the loss. These entries correspond to orthogonal pairs of vectors. The total loss corresponding to $M_1$ is thus upper bounded by $(n-1)\cdot l(0)+o(1)$.
	\item The entries of the two vectors $M_2 \alpha$ are close to $0$, contributing approximately $2n \cdot l(0)$ to the loss.
	\end{itemize}
	
	To show Lemma~\ref{lower_bound}, we need to argue that no vector $\alpha$ achieves loss smaller than $3n\cdot l(0)-o(1)$.
  We first observe that the two copies of $M_2$ contributes a loss of at least $2n\cdot l(0)$, independently of the values of $\alpha$.
  We then  show that the entries of $\alpha$ cannot be too large (in absolute value), as otherwise the loss incurred by the two copies of the matrix $M_2$ would be too large.
  Using this property, we show that the loss incurred by $M_1$ is lower bounded by $n\cdot l(0)-o(1)$.  
	
	Since $l(0)>0$ (the third property in Definition \ref{nice}), we can distinguish $\leq 3n \cdot l(0) + o(1)$ from $\geq (3n + 1) \cdot l(0) - o(1)$ with a high-accuracy ERM solution. Hence Lemmas \ref{upper_bound} and \ref{lower_bound} imply that Problem \eqref{optM} is SETH hard.

\section{Hardness proof for gradient computation}
\label{sec:gradient}

Finally, we consider the problem of computing the gradient of the loss function for a given set of examples. We focus on the network architecture as in the previous section. We first prove the following lemma. 

\begin{lemma} \label{gradient}
	Let $F_{\alpha, B}(a):=\sum_{j=1}^n \alpha_j S(a, b_j)$ be the output of a neural net with activation function $S$, where
	\begin{itemize}
		\item $a$ is an input vector from the set $A := \{a_1, \ldots, a_m\} \subseteq \{0,1\}^d$;
		\item $B:=\{b_1, \ldots, b_n\} \subseteq \{0,1\}^d$ is a set of binary vectors;
		\item $\alpha=\{\alpha_1, \ldots, \alpha_n\}^T \in \R^n$ is an $n$-dimensional real-valued vector.
	\end{itemize}
	For some loss function $l:\R\to\R$, let $l(F_{\alpha, B}(a))$ be the loss for input $a$ when the label of the input $a$ is $+1$.
	
	Consider the gradient of the total loss $l_{\alpha,A,B}:=\sum_{a \in A}l(F_{\alpha,B}(a))$ at $\alpha_1=\ldots=\alpha_n=0$ with respect to $\alpha_1, \ldots, \alpha_n$.
	The sum of the entries of the gradient is equal to $l'(0)\cdot \sum_{a\in A, b \in B}S(a,b)$, where $l'(0)$ is the derivative of the loss function $l$ at $0$.
\end{lemma}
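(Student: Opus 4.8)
The plan is a direct application of the chain rule, so the proof will be short. First I would fix an index $k \in \{1,\ldots,n\}$ and differentiate the total loss $l_{\alpha,A,B} = \sum_{a \in A} l(F_{\alpha,B}(a))$ with respect to $\alpha_k$. Since $F_{\alpha,B}(a) = \sum_{j=1}^n \alpha_j S(a,b_j)$ is linear in $\alpha$, we have $\partial F_{\alpha,B}(a)/\partial \alpha_k = S(a,b_k)$, and hence by the chain rule
\[
\frac{\partial l_{\alpha,A,B}}{\partial \alpha_k} \;=\; \sum_{a \in A} l'\bigl(F_{\alpha,B}(a)\bigr)\, S(a,b_k).
\]

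Next I would evaluate this expression at $\alpha_1 = \cdots = \alpha_n = 0$. At that point $F_{0,B}(a) = \sum_{j=1}^n 0 \cdot S(a,b_j) = 0$ for every $a \in A$, so each factor $l'(F_{0,B}(a))$ collapses to the single constant $l'(0)$, which can be pulled out of the sum. This gives $\partial l_{\alpha,A,B}/\partial \alpha_k\big|_{\alpha=0} = l'(0) \sum_{a \in A} S(a,b_k)$. Finally I would sum over $k = 1, \ldots, n$ and interchange the order of summation:
\[
\sum_{k=1}^n \frac{\partial l_{\alpha,A,B}}{\partial \alpha_k}\Big|_{\alpha=0} \;=\; l'(0) \sum_{k=1}^n \sum_{a \in A} S(a,b_k) \;=\; l'(0) \sum_{a \in A,\, b \in B} S(a,b),
\]
which is exactly the claimed identity.

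There is no substantive obstacle; the only point requiring a word of care is that the chain rule needs $l$ to be differentiable at the evaluation point $0$ (true for the hinge and logistic losses), and it is essential that the gradient is taken precisely at $\alpha = 0$ — this is what makes every activation argument equal to $0$ so that $l'(0)$ factors out uniformly. The lemma is then applied in the hardness reductions by choosing a loss with $l'(0) \neq 0$, so that computing the sum of the gradient entries recovers $\sum_{a \in A, b \in B} S(a,b)$ up to a known nonzero scalar, and hardness follows from the OVP/BHCP-style arguments used for the ERM problems in the previous sections.
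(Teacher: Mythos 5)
Your proof is correct and matches the paper's argument: both differentiate via the chain rule using the linearity of $F_{\alpha,B}$ in $\alpha$, observe that $F_{0,B}(a)=0$ so $l'(0)$ factors out, and then sum over the coordinates. The only cosmetic difference is that you make the final summation over $k$ explicit, while the paper leaves it implicit.
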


\begin{proof}
	The proof follows from
	\begin{align*}
		{\partial l_{\alpha,A,B} \over \partial \alpha_j}& =\sum_{a \in A} {\partial l(F_{\alpha,B}(a))\over \partial F_{\alpha,B}(a)}S(a,b_j)\\
		& =l'(0)\cdot \sum_{a \in A}S(a,b_j) \hspace{.7 cm} \text{(since }F_{\alpha,B}(a)=0\text{)}.
	\end{align*}
\end{proof}

For the hinge loss function, we have that the loss function is $l(x)=\max(0,1-x)$ if the label is $+1$. Thus, $l'(0)=-1$.
For the logistic loss function, we have that the loss function is $l(x)={1 \over \ln 2}\ln\left(1+e^{-x}\right)$ if the label is $+1$. Thus, $l'(0)=-{1 \over 2\ln 2}$ in this case.

\begin{proof}[Proof of Theorem~\ref{t:gradient}]
	We set $S(a,b):=\max(0,1-2a^\transp b)$. 
	Using Lemma \ref{gradient}, we get that the sum of entries of the gradient of the total loss function is equal to $l'(0)\cdot\sum_{a\in A, b \in B}1_{a^\transp b=0}$.
  Since $l'(0)\neq 0$, this reduces OV to the gradient computation problem. 
\end{proof}

We can also show the same statement holds for the sigmoid activation function. 

\begin{theorem}
	Consider a neural net with of size $n$ with the sigmoid activation function $\sigma(x)={1 \over 1+e^{-x}}$.
  Computing a gradient of the empirical loss for $m$ examples requires time $(nm)^{1-o(1)}$ assuming SETH.
\end{theorem}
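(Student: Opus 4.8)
# Proof Proposal for the Sigmoid Gradient Hardness Theorem

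The plan is to mirror the ReLU argument from the proof of Theorem~\ref{t:gradient}, adapting the construction so that the sigmoid activation plays the role that $\max(0,1-2a^\transp b)$ played there. By Lemma~\ref{gradient}, evaluating the total-loss gradient at $\alpha=0$ and summing its entries yields $l'(0)\cdot\sum_{a\in A,\,b\in B}S(a,b)$, where $l'(0)\neq 0$ for both the logistic and hinge losses. So it suffices to choose an affine preprocessing of the inner products $a^\transp b$ — replacing the raw argument $z=a^\transp b$ by $z'=\gamma z+\tau$ for suitable scalars $\gamma,\tau$ (absorbed into the weight vectors $b_j$ and an extra coordinate) — such that $\sigma(\gamma z+\tau)$ separates the case $z=0$ (orthogonal pair) from $z\geq 1$ (non-orthogonal pair) strongly enough that the sum over all $n^2$ pairs reveals whether a zero-inner-product pair exists.

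Concretely, I would first normalize the OVP instance using the padding trick noted after Definition~\ref{OVP}, so that every $b_i$ has the same number of ones; then for $a\in A,b\in B$ the inner product $a^\transp b$ is a nonnegative integer, equal to $0$ exactly for an orthogonal pair and at least $1$ otherwise. Next I would pick the affine map so that $\gamma\cdot 0+\tau$ lands at a value where $\sigma$ is $\Theta(1)$ and $\gamma\cdot 1+\tau$ lands far in the left tail where $\sigma$ is inverse-superpolynomially small — e.g. take $\tau$ a large positive constant and $\gamma=-\omega(\log n)$, paralleling the choice $v_0=v_1+C$, $v_1=-\log(n^T-1)$, $v_2=v_1-C$ from the nice-activation discussion for sigmoid. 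Then $\sum_{a,b}\sigma(\gamma\,a^\transp b+\tau)$ is $\geq \Theta(1)$ if an orthogonal pair exists and is $\leq n^2\cdot n^{-\omega(1)}=n^{-\omega(1)}$ otherwise, a gap of more than a factor of $2$, so the sign/magnitude of the gradient-entry sum distinguishes the two cases. Multiplying by $l'(0)\neq 0$ preserves this gap. Finally, to realize the affine map inside the network architecture of Section~\ref{ss:retraining}, I would append one coordinate to each $a_i$ holding the constant $1$ and scale the $b_j$'s (and set the new coordinate of $b_j$ to $\tau/\gamma$), so that $a_i^\transp b_j$ in the augmented space equals $(\gamma\,a^\transp b+\tau)/\gamma$; folding $\gamma$ into the activation's input is exactly the freedom the definition of "nice activation" exploits. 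This gives a reduction from OVP with $|A|=m$, $|B|=n$ to one gradient computation on a network of $n$ units and $m$ examples in dimension $d+O(1)=\omega(\log n)$, and the $(nm)^{1-o(1)}$ lower bound follows from the hardness of OVP with unbalanced set sizes \cite{bringmann2015quadratic}.

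The one point requiring care — and the main obstacle — is precision: the statement concerns computing \emph{a} gradient, and a numerically exact gradient immediately solves OVP, but one should check that a gradient computed to the additive accuracy implicit in the model (say, inverse-superpolynomial, matching the $\eps=\exp(-\omega(\log^2 n))$ scale used throughout) still suffices. This holds because the ``no orthogonal pair'' sum is bounded by $n^2\,\sigma(\gamma+\tau)=\exp(-\omega(\log^2 n))$ after choosing $C=\gamma$ appropriately (e.g. $C=\Theta(\log^2 n)$), while the ``yes'' sum is $\Omega(1)$, so any approximation error below half the ``yes'' value is harmless; the additive precision needed is $\exp(-\omega(\log^2 n))$, exactly as in Lemma~\ref{sum_hardness}. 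A secondary bookkeeping issue is that $\sigma$ is strictly positive everywhere, so unlike ReLU the ``off'' entries are never exactly zero — but this is precisely why we route them into the far left tail, and the geometric-sum bound over $n^2$ terms controls their aggregate contribution. With these two checks in place the theorem follows.
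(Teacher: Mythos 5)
Your proposal is correct and follows essentially the same route as the paper: invoke Lemma~\ref{gradient} to reduce the gradient-sum at $\alpha=0$ to $l'(0)\cdot\sum_{a,b}\sigma(\text{scaled }a^\transp b)$, then pick the scaling so that the sigmoid is $\Theta(1)$ at an orthogonal pair and negligibly small otherwise. The paper does this slightly more economically by setting $S(a,b)=\sigma(-10(\log n)\,a^\transp b)$, which already yields $\sigma(0)=1/2$ at orthogonal pairs versus $\leq n^{-10}$ per non-orthogonal pair — so your extra additive shift $\tau$ and the augmented coordinate used to realize it are unnecessary (a pure rescaling of the $b_j$ weights suffices) — but your version is equally valid, and your precision discussion is a sound expansion of what the paper leaves implicit.
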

\begin{proof}
	We set $S(a,b):=\sigma(-10 (\log n) \cdot a^\transp b)$. Using Lemma \ref{gradient}, we get that the sum of entries of the gradient is equal to $l'(0)\cdot\sum_{a\in A, b \in B}{1 \over 1+e^{10 (\log n) \cdot a^\transp b}}$.
  It is easy to show that this quantity is at least $l'(0)/2$ if there is an orthogonal pair and at most $l'(0)/n$ otherwise.
  Since $l'(0)\neq 0$, we get the required hardness.
\end{proof}
 
\section{Conclusions}
We have shown that a range of kernel problems require quadratic time for obtaining a high accuracy solution unless the Strong Exponential Time Hypothesis is false.
These problems include variants of the kernel SVM, kernel ridge regression, and kernel PCA.
We also gave a similar hardness result for training the final layer of a depth-2 neural network.
This result is general and applies to multiple loss and activation functions.
Finally, we proved that computing the empirical loss gradient for such networks takes time that is essentially ``rectangular'', i.e., proportional to the product of the network size and the number of examples.

Several of our results are obtained by a reduction from the (exact) Bichromatic Hamming Closest Pair problem or the Orthogonal Vectors problem. This demonstrates a strong connection between kernel methods and similarity search, and suggests that perhaps a reverse reduction is also possible. Such a reduction could potentially lead to faster approximate algorithms for kernel methods: although the exact closest pair problem has no known sub-quadratic solution, efficient and practical sub-quadratic time algorithms for the approximate version of the problem exist (see e.g.,~\cite{andoni2006near,valiant2012finding,andoni2015optimal,AILRS15,alman2016polynomial}). 

Another interesting goal would be obtaining fine-grained lower bounds for other optimization problems.
 	
\bibliographystyle{alpha}
\newcommand{\etalchar}[1]{$^{#1}$}

\appendix

\section{Preliminaries}
\label{s:prelim}

In this section we define several notions used later in the paper. We start from the soft-margin support vector machine (see \cite{muller2001introduction}).

\begin{definition}[Support Vector Machine (SVM)] \label{def_soft}
	Let $x_1, \ldots, x_n \in \R^d$ be $n$ vectors and $y_1, \ldots, y_n \in \{-1,1\}$ be $n$ labels. 
	Let $k(x,x')$ be a kernel function.
	An optimization problem of the following form is a (primal) SVM.
	\begin{equation} \label{eq_soft_primal}
		\begin{aligned}
			& \underset{\substack{\alpha_1, \ldots, \alpha_n \geq 0, \ \ b \\ \xi_1, \ldots, \xi_n \geq 0}}{\text{minimize}}
			& & {\lambda \over 2}\sum_{i,j=1}^n \alpha_i \alpha_j y_i y_j k(x_i,x_j) \ + \ {1 \over n}\sum_{i=1}^n \xi_i\\
			& \text{subject to}
			& & y_i f(x_i) \geq 1-\xi_i, \ \ i = 1, \ldots, n,
		\end{aligned}
	\end{equation}
	where $f(x):=b+\sum_{i=1}^n \alpha_i y_i k(x_i,x)$ and $\lambda \geq 0$ is called the regularization parameter.
	$\xi_i$ are known as the slack variables.
	
	The dual SVM is defined as
	\begin{equation} \label{eq_soft_dual}
		\begin{aligned}
			& \underset{\alpha_1, \ldots, \alpha_n \geq 0}{\text{maximize}}
			& & \sum_{i=1}^n \alpha_i - {1 \over 2}\sum_{i,j=1}^n \alpha_i \alpha_j y_i y_j k(x_i,x_j) \\
			& \text{subject to}
			& & \sum_{i=1}^n \alpha_i y_i=0, \\
			& & & \alpha_1, \ldots, \alpha_n \leq {1 \over \lambda n}.
		\end{aligned}
	\end{equation}
	
	We refer to the quantity $b$ as the bias term.
	When we require that the bias is $b=0$, we call the optimization problem as SVM without the bias term.
	The primal SVM without the bias term remains the same except $f(x)=\sum_{i=1}^n \alpha_i y_i k(x_i,x)$. The dual SVM remains the same except we remove the equality constraint $\sum_{i=1}^n \alpha_i y_i=0$.
\end{definition}

 The (primal) hard-margin SVM defined in the previous section corresponds to soft-margin SVM in the setting when $\lambda \to 0$.
The dual hard-margin SVM is defined as follows.

\begin{definition}[Dual hard-margin SVM] \label{def_dual}
	Let $x_1, \ldots, x_n \in \R^d$ be $n$ vectors and $y_1, \ldots, y_n \in \{-1,1\}$ be $n$ labels. 
	Let $k(x,x')$ be a kernel function.
	An optimization problem of the following form is a dual hard-margin SVM.
	\begin{equation} \label{eq_dual}
		\begin{aligned}
			& \underset{\alpha_1, \ldots, \alpha_n \geq 0}{\text{maximize}}
			& & \sum_{i=1}^n \alpha_i - {1 \over 2}\sum_{i,j=1}^n \alpha_i \alpha_j y_i y_j k(x_i,x_j) \\
			& \text{subject to}
			& & \sum_{i=1}^n \alpha_i y_i=0.
		\end{aligned}
	\end{equation}
	
	If the primal hard-margin SVM is without the bias term ($b=0$), then we omit the inequality constraint $\sum_{i=1}^n \alpha_i y_i=0$ in the dual SVM.
\end{definition}

We will use the following fact (see \cite{muller2001introduction}).
\begin{fact}\label{primal_dual}
	If $\alpha_1^*, \ldots, \alpha_n^*$ achieve the minimum in an SVM, then the same $\alpha_1^*, \ldots, \alpha_n^*$ achieve the maximum in the dual SVM. Also, the minimum value and the maximum value are equal. 
\end{fact}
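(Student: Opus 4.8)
The plan is to deduce Fact~\ref{primal_dual} from Lagrangian duality for convex quadratic programs with affine constraints; I would carry out the hard-margin bias-free case in full, the other variants differing only in bookkeeping. Setting $D:=\mathrm{diag}(y_1,\dots,y_n)$ and $Q:=DKD$, the matrix $Q$ is positive semidefinite (since $v^\transp Q v=(Dv)^\transp K(Dv)\ge 0$ and every kernel matrix $K$ is PSD), the objective of~\eqref{eq_primal} is $\frac12\alpha^\transp Q\alpha$, and the margin constraint is $y_i f(x_i)=(Q\alpha)_i\ge 1$. So~\eqref{eq_primal} reads $\min\{\frac12\alpha^\transp Q\alpha:\alpha\ge 0,\ Q\alpha\ge\mathbf 1\}$ and the bias-free dual~\eqref{eq_dual} reads $\max\{\mathbf 1^\transp\alpha-\frac12\alpha^\transp Q\alpha:\alpha\ge 0\}$; the goal is to show these have the same optimal set and the same optimal value.

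First I would establish strong duality, $p^\star=d^\star$: the dual is the Lagrangian dual of the geometric primal $\min_w\{\frac12\|w\|^2:y_i\langle w,\phi(x_i)\rangle\ge 1\ \forall i\}$ (obtained by minimizing out $w$, which forces $w=\sum_i\alpha_i y_i\phi(x_i)$), and this geometric primal is convex with affine constraints and is feasible in the separable case, so the constraint qualification is automatic and there is no duality gap; moreover~\eqref{eq_primal} has the same optimal value as the geometric primal, because writing $w=\sum_i\alpha_i y_i\phi(x_i)$ with $\alpha\ge 0$ is without loss of generality at the optimum (the optimal $w^\star$ equals $\sum_i\mu_i^\star y_i\phi(x_i)$ for the nonnegative KKT multipliers $\mu^\star$). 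One may alternatively verify directly that the dual QP above is the Lagrangian dual of the primal QP.

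Next I would match the optimal solutions via complementary slackness. At a minimizer $\alpha^\star$ of~\eqref{eq_primal}, KKT (necessary, since the constraints are affine) gives $\alpha_i^\star((Q\alpha^\star)_i-1)=0$ for all $i$, hence $(\alpha^\star)^\transp Q\alpha^\star=\mathbf 1^\transp\alpha^\star$, and therefore the dual objective at $\alpha^\star$ equals $\mathbf 1^\transp\alpha^\star-\frac12(\alpha^\star)^\transp Q\alpha^\star=\frac12(\alpha^\star)^\transp Q\alpha^\star=p^\star=d^\star$, so $\alpha^\star$ is dual-optimal. Symmetrically, at a maximizer $\hat\alpha$ of~\eqref{eq_dual} the first-order conditions for maximizing $\mathbf 1^\transp\alpha-\frac12\alpha^\transp Q\alpha$ over $\alpha\ge 0$ give $(Q\hat\alpha)_i=1$ when $\hat\alpha_i>0$ and $(Q\hat\alpha)_i\ge 1$ when $\hat\alpha_i=0$; in particular $Q\hat\alpha\ge\mathbf 1$, so $\hat\alpha$ is primal-feasible, and the same case split gives $\hat\alpha^\transp Q\hat\alpha=\mathbf 1^\transp\hat\alpha$, so the primal objective at $\hat\alpha$ is $p^\star$ and $\hat\alpha$ is primal-optimal. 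Hence the minimizers of~\eqref{eq_primal} coincide with the maximizers of~\eqref{eq_dual} and the two optimal values are equal, which is exactly Fact~\ref{primal_dual}.

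Finally, the soft-margin problem of Definition~\ref{def_soft} only adds slacks $\xi_i\ge 0$ with the penalty $\frac1n\sum_i\xi_i$, which in the dual becomes the box constraint $\alpha_i\le\frac1{\lambda n}$, and the bias term reintroduces the equality $\sum_i\alpha_i y_i=0$ as the stationarity condition in $b$; the same Lagrangian/KKT bookkeeping applies verbatim. I expect the only mildly delicate point to be confirming that the Lagrangian dual of the primal collapses to \emph{exactly} the stated dual QP (eliminating $w$, or equivalently the $w\leftrightarrow\alpha$ reparametrization) together with checking feasibility so that strong duality and the necessity of KKT are legitimate; this is classical and can also simply be quoted from~\cite{muller2001introduction}.
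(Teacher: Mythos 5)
The paper does not actually prove Fact~\ref{primal_dual}; it simply cites it from~\cite{muller2001introduction}. So there is no paper proof to compare against, and you are filling in a step the authors treat as standard. Your plan is a reasonable one, but the forward direction (primal minimizer $\Rightarrow$ dual maximizer) has a genuine gap.

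The gap is in the sentence ``At a minimizer $\alpha^\star$ of~\eqref{eq_primal}, KKT\ldots\ gives $\alpha_i^\star\bigl((Q\alpha^\star)_i-1\bigr)=0$.'' The KKT conditions for $\min\{\tfrac12\alpha^\transp Q\alpha : Q\alpha\ge\mathbf 1,\ \alpha\ge 0\}$ yield complementary slackness $\mu_i\bigl((Q\alpha^\star)_i-1\bigr)=0$ for the Lagrange multipliers $\mu$ attached to the margin constraints, not for the primal variable $\alpha^\star$ itself. You are implicitly identifying $\mu$ with $\alpha^\star$, which is exactly what happens in the \emph{geometric} primal over $w$ (there the $\alpha_i$ are the multipliers), but does not follow for the $\alpha$-parametrized program~\eqref{eq_primal} without a further argument. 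The distinction is not cosmetic: if $Q$ is only PSD and one margin constraint is implied by another, a minimizer of~\eqref{eq_primal} need not be a dual maximizer. For instance, with $n=2$, $y_1=y_2=1$, and $K=\bigl(\begin{smallmatrix}4&2\\2&1\end{smallmatrix}\bigr)$ (a PSD Gram matrix of rank one), the primal minimizers form the segment $\{2\alpha_1+\alpha_2=1,\ \alpha\ge 0\}$, but only the endpoint $(0,1)$ maximizes the dual; at $(1/2,0)$ one has $\alpha_1^\star>0$ and $(Q\alpha^\star)_1=2>1$, so your complementary slackness fails and the dual value there is $0\ne 1/2$. Your reverse direction (dual maximizer $\Rightarrow$ primal minimizer) is fine: the stationarity conditions of the box-constrained concave maximization really do give $Q\hat\alpha\ge\mathbf 1$ and $\hat\alpha_i\bigl((Q\hat\alpha)_i-1\bigr)=0$, and weak duality then sandwiches the values.

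The clean fix, consistent with the paper's use of the fact (Gaussian kernel, so $K$ and hence $Q=DKD$ is positive definite), is to add a uniqueness step rather than assert complementary slackness on the primal side. When $Q\succ 0$, the primal objective is strictly convex and the dual objective is strictly concave, so each problem has a unique optimizer. Run your (correct) dual-to-primal direction: the unique dual maximizer $\hat\alpha$ is primal-feasible with primal objective equal to the dual optimum, which by weak duality forces $p^\star=d^\star$ and $\hat\alpha$ to be primal-optimal; uniqueness then gives $\hat\alpha=\alpha^\star$. Alternatively, go through the geometric primal as you sketch, but be explicit that the identification of the minimizer of~\eqref{eq_primal} with the KKT multipliers requires that the map $\alpha\mapsto\sum_i\alpha_i y_i\phi(x_i)$ be injective, which again is the nondegeneracy hypothesis $K\succ 0$. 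Either way, you should state the positive-definiteness (or equivalent nondegeneracy) hypothesis explicitly; as written, the argument overclaims.
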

 
\section{Hardness for SVM without the bias term term}
\label{s:hardness}
	In this section we formalize the intuition from Section~\ref{ss:overview}. We start from the following two lemmas.

	\begin{lemma}[NO case] \label{nocase}
		If for all $a_i \in A$ and $b_j \in B$ we have $\Hamming(a_i,b_j) \geq t$, then
		$$
			\val(A,B)\leq \val(A)+\val(B) + 200n^6 \exp(-100\log n \cdot t).
		$$
	\end{lemma}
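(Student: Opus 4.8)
The plan is to upper bound $\val(A,B)$ by exhibiting an explicit feasible point for the third SVM \eqref{SVM3}, built from mild rescalings of the optimal solutions of \eqref{SVM1} and \eqref{SVM2}. Throughout, recall that for binary vectors $\|a_i-b_j\|_2^2=\Hamming(a_i,b_j)$, so the hypothesis gives the uniform bound $k(a_i,b_j)\le\exp(-100\log n\cdot t)=:\delta$ for all $i,j$; since $t\ge 2$ we have $\delta n^{3/2}<1/2$.

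First I would record two a priori bounds. Each of \eqref{SVM1}, \eqref{SVM2} minimizes a PSD quadratic over a nonempty polyhedron (the all-ones vector is feasible since $k(a_i,a_i)=1$), so by the standard Frank--Wolfe attainment theorem the optima are achieved, say at $\alpha^\ast,\beta^\ast\ge 0$. Plugging in the all-ones vector also shows $\val(A),\val(B)\le n^2/2$. Moreover, since the off-diagonal kernel entries are nonnegative and the diagonal entries equal $1$,
\[
\|\alpha^\ast\|_2^2 \;\le\; \sum_{i,j=1}^n \alpha_i^\ast\alpha_j^\ast k(a_i,a_j) \;=\; 2\val(A)\;\le\; n^2,
\]
so $\|\alpha^\ast\|_1\le\sqrt n\,\|\alpha^\ast\|_2\le n^{3/2}$, and likewise $\|\beta^\ast\|_1\le n^{3/2}$.

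Next I would set $\gamma:=2\delta n^{3/2}\in(0,1)$ and take $\widehat\alpha:=(1+\gamma)\alpha^\ast$, $\widehat\beta:=(1+\gamma)\beta^\ast$ as the candidate point for \eqref{SVM3}. For the first family of constraints, using $\sum_j\alpha_j^\ast k(a_i,a_j)\ge 1$ and $\sum_j\widehat\beta_j k(a_i,b_j)\le(1+\gamma)\delta\|\beta^\ast\|_1\le(1+\gamma)\delta n^{3/2}$,
\[
\sum_{j}\widehat\alpha_j k(a_i,a_j) - \sum_{j}\widehat\beta_j k(a_i,b_j) \;\ge\; (1+\gamma)\bigl(1-\delta n^{3/2}\bigr) \;\ge\; 1,
\]
since $(1+2\delta n^{3/2})(1-\delta n^{3/2})\ge 1$ whenever $2\delta n^{3/2}\le 1$; the second family is handled symmetrically (now $k(b_i,a_j)\le\delta$ and $\|\alpha^\ast\|_1\le n^{3/2}$). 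Finally, evaluating the objective of \eqref{SVM3} at $(\widehat\alpha,\widehat\beta)$ and using that the cross term is nonnegative,
\[
\tfrac12\!\sum_{i,j}\widehat\alpha_i\widehat\alpha_j k(a_i,a_j)+\tfrac12\!\sum_{i,j}\widehat\beta_i\widehat\beta_j k(b_i,b_j)-\!\sum_{i,j}\widehat\alpha_i\widehat\beta_j k(a_i,b_j)\;=\;(1+\gamma)^2\Bigl(\val(A)+\val(B)-\textstyle\sum_{i,j}\alpha_i^\ast\beta_j^\ast k(a_i,b_j)\Bigr),
\]
which is at most $(1+\gamma)^2(\val(A)+\val(B))\le \val(A)+\val(B)+3\gamma n^2=\val(A)+\val(B)+6n^{7/2}\delta\le\val(A)+\val(B)+200n^6\exp(-100\log n\cdot t)$. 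Since $\val(A,B)$ is bounded above by the objective at any feasible point, the lemma follows.

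The only mild subtleties are the a priori $\ell_1$-bound on the optimal solutions (needed to control how far the cross terms can dent feasibility, hence how large the inflation factor $\gamma$ must be) and the feasibility check for the inflated point; both become routine once $k(a_i,b_j)\le\delta$ is in hand. Note that the slack $200n^6$ in the statement is very loose — the argument actually loses only a factor $n^{7/2}\delta$.
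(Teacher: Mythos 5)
Your proof is correct, and it takes a genuinely different route from the paper's. The paper constructs a feasible point for \eqref{SVM3} by an \emph{additive} shift, setting $\alpha_i' := \alpha_i^* + 10n^2\delta$ and $\beta_i' := \beta_i^* + 10n^2\delta$, and then expands the quadratic objective term-by-term. That argument leans on a separate auxiliary lemma establishing coordinate-wise bounds $1/2 \le \alpha_i^* \le n$ (which the paper also reuses in the YES case). You instead rescale \emph{multiplicatively}, $\widehat\alpha := (1+\gamma)\alpha^*$ with $\gamma = 2\delta n^{3/2}$, and drive everything from the single global $\ell_1$-bound $\|\alpha^*\|_1 \le n^{3/2}$, which you extract directly from $\val(A)\le n^2/2$ via $\|\alpha^*\|_2^2 \le 2\val(A)$ and Cauchy--Schwarz. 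The multiplicative form exploits the degree-2 homogeneity of the objective (it factors out cleanly as $(1+\gamma)^2$), so you never need to expand cross terms of the perturbation, and the feasibility check $(1+2x)(1-x)\ge 1$ for $x=\delta n^{3/2}\le 1/2$ is a one-liner. This makes the NO-case argument self-contained (no reliance on the coordinate-wise lemma) and yields the sharper error $6n^{7/2}\delta$, comfortably inside the stated $200n^6\delta$. The paper's route has the offsetting advantage that its auxiliary lemma is needed anyway for the YES case (where the lower bound $\alpha_i^*\ge 1/2$ is essential), so the work there is amortized; your $\ell_1$-bound alone would not suffice for that direction.
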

	
	\begin{lemma}[YES case] \label{yescase}
		If there exist $a_i \in A$ and $b_j \in B$ such that $\Hamming(a_i,b_j) \leq t-1$, then
		$$
			\val(A,B)\geq \val(A)+\val(B) + {1 \over 4} \exp(-100\log n \cdot (t-1)).
		$$
	\end{lemma}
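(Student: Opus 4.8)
The plan is to pass to the \emph{dual} of each of the three SVMs and to exhibit a single dual-feasible point for the combined problem~\eqref{SVM3}, assembled from the dual optima of~\eqref{SVM1} and~\eqref{SVM2}. Write $K_A,K_B\in\R^{n\times n}$ for the kernel matrices of $A$ and $B$, set $K_{AB}:=(k(a_i,b_j))_{i,j}\in\R^{n\times n}$, and put $f_A(\alpha):=\mathbf 1^\transp\alpha-\tfrac12\alpha^\transp K_A\alpha$ and $f_B(\beta):=\mathbf 1^\transp\beta-\tfrac12\beta^\transp K_B\beta$. By Fact~\ref{primal_dual}, $\val(A)=\max_{\alpha\ge 0}f_A(\alpha)$ and $\val(B)=\max_{\beta\ge 0}f_B(\beta)$. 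In the no-bias dual form of Definition~\ref{def_dual}, using label $+1$ on every $a_i$ and $-1$ on every $b_j$, the dual objective of~\eqref{SVM3} expands to
\[
  D(\alpha,\beta)=f_A(\alpha)+f_B(\beta)+\alpha^\transp K_{AB}\beta,
\]
where the cross term appears with a \emph{plus} sign precisely because the two groups carry opposite labels.

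Let $\alpha^\star\ge 0$ and $\beta^\star\ge 0$ maximize $f_A$ and $f_B$. Evaluating $D$ at $(\alpha^\star,\beta^\star)$ and using weak duality for~\eqref{SVM3},
\[
  \val(A,B)\ \ge\ \max_{\alpha,\beta\ge 0}D(\alpha,\beta)\ \ge\ D(\alpha^\star,\beta^\star)=\val(A)+\val(B)+(\alpha^\star)^\transp K_{AB}\beta^\star .
\]
All entries of $K_{AB}$ and all coordinates of $\alpha^\star,\beta^\star$ are nonnegative, so $(\alpha^\star)^\transp K_{AB}\beta^\star\ge\alpha^\star_{i'}\,\beta^\star_{j'}\,k(a_{i'},b_{j'})$ for the close pair $(i',j')$; and since $\Hamming(a_{i'},b_{j'})=\|a_{i'}-b_{j'}\|_2^2\le t-1$, the Gaussian kernel with $C=100\log n$ gives $k(a_{i'},b_{j'})\ge\exp(-100\log n\cdot(t-1))$. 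Hence it suffices to show $\alpha^\star_{i'}\ge\tfrac12$ and $\beta^\star_{j'}\ge\tfrac12$, which then delivers exactly the claimed inequality.

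For that last point, we may assume the vectors within $A$ (resp.\ within $B$) are pairwise distinct, since duplicates can be discarded without changing the BHCP answer; then every off-diagonal entry of $K_A$ is at most $\exp(-100\log n)=n^{-100}$, so $K_A=I+E_A$ with $E_A$ zero on the diagonal and of maximum absolute row sum at most $n\cdot n^{-100}<1$. Thus $K_A\succ 0$, $f_A$ is strictly concave, and its \emph{unconstrained} maximizer $K_A^{-1}\mathbf 1=\mathbf 1-E_A\mathbf 1+E_A^2\mathbf 1-\cdots$ has every coordinate in $[1-o(1),\,1+o(1)]$; in particular it is coordinatewise positive, hence coincides with the constrained maximizer $\alpha^\star$, giving $\alpha^\star_{i'}\ge\tfrac12$ for large $n$, and symmetrically $\beta^\star_{j'}\ge\tfrac12$. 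I expect this last step to be the only delicate part: the ``near-identity kernel matrix $\Rightarrow$ near-uniform dual weights'' argument needs the within-group vectors to be distinct (hence the preprocessing remark), and one must check that the unconstrained optimum of the non-negatively-constrained dual quadratic actually lands in the non-negative orthant --- which is exactly what the near-identity structure, guaranteed by the aggressive choice $C=100\log n$, provides. The remaining ingredients (the sign of the cross term, weak duality for~\eqref{SVM3}, strong duality for~\eqref{SVM1}--\eqref{SVM2} via Fact~\ref{primal_dual}, and discarding nonnegative summands) are routine.
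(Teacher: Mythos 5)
Your proof is correct and runs along the same skeleton as the paper's: pass to the no-bias duals, use duality to turn the primal minimum $\val(A,B)$ into a maximum, evaluate the combined dual objective at the pair $(\alpha^\star,\beta^\star)$ of separately optimal dual points, and observe that the nonnegative cross term $\sum_{i,j}\alpha^\star_i\beta^\star_j k(a_i,b_j)$ is at least the single close-pair summand. (You invoke only weak duality for~\eqref{SVM3}, which is all the lower-bound direction needs; the paper uses strong duality throughout via Fact~\ref{primal_dual}, but that is immaterial here.) The one genuine difference is how the crucial bound $\alpha^\star_{i'},\beta^\star_{j'}\geq\tfrac12$ is obtained. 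The paper proves in Lemma~\ref{bounded} that the \emph{primal} optimizers of~\eqref{SVM1} and~\eqref{SVM2} lie in $[\tfrac12,n]$ coordinatewise, then transfers this to the dual optimizers via Fact~\ref{primal_dual}. You instead argue purely on the dual side: because the within-group kernel matrices $K_A,K_B$ are near-identity (off-diagonal entries at most $n^{-100}$ once duplicate vectors are discarded), the unconstrained maximizer $K_A^{-1}\mathbf 1=\mathbf 1-E_A\mathbf 1+E_A^2\mathbf 1-\cdots$ lies within $o(1)$ of $\mathbf 1$, hence is coordinatewise positive and therefore is the constrained maximizer. Your route is somewhat more self-contained (it avoids the primal-side Lemma~\ref{bounded} entirely) and in fact pins $\alpha^\star$ down more tightly, to $\mathbf 1\pm o(1)$, though $\geq\tfrac12$ is all the lemma requires. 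Your explicit remark about discarding duplicates simply makes visible an assumption the paper's Lemma~\ref{bounded} also relies on implicitly when it asserts $\Hamming(a_i,a_j)\geq 1$ for all $i\neq j$.
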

	
	Assuming the two lemmas we can distinguish the NO case from the YES case because 
	$$
		200n^6 \exp(-100\log n \cdot t) \ll {1 \over 4} \exp(-100\log n \cdot (t-1))
	$$
	by our choice of the parameter $C=100 \log n$ for the Gaussian kernel.
	
	Before we proceed with the proofs of the two lemmas, we prove the following auxiliary statement.
	
	\begin{lemma}\label{bounded}
		Consider SVM \eqref{SVM1}. Let $\alpha_1^*, \ldots, \alpha_n^*$ be the setting of values for $\alpha_1, \ldots, \alpha_n$ that achieves $\val(A)$. Then for all $i=1, \ldots, n$ we have that $n \geq \alpha_i^* \geq 1/2$.
		
		Analogous statement holds for SVM \eqref{SVM2}.
	\end{lemma}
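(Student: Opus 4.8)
The plan is to show the much stronger statement that the optimal weights are within an inverse-polynomial of $1$, i.e.\ $\alpha_i^* = 1 \pm o(1)$ for every $i$. The whole point is that the kernel matrix $K_A \in \R^{n\times n}$ with $(K_A)_{i,j} = k(a_i,a_j)$ is an \emph{almost identity}: its diagonal entries equal $k(a_i,a_i) = 1$, and for $i \neq j$, since the $a_i$ are binary vectors which we may assume to be pairwise distinct (duplicates can be removed by a trivial preprocessing of the BHCP instance), we have $\|a_i - a_j\|_2^2 \geq 1$, hence $(K_A)_{i,j} = \exp(-100\log n\cdot\|a_i-a_j\|_2^2) \leq \exp(-100\log n) = n^{-100}$. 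Writing $K_A = I + E$ with $E_{i,i} = 0$ and $0 \leq E_{i,j}\leq n^{-100}$, every row of $E$ has $\ell_1$-norm at most $n^{-99}$, so $\|E\|_{\mathrm{op}} \leq n^{-99}$; in particular $K_A$ is positive definite, hence invertible.

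\textbf{Main step.} Consider the candidate $\hat\alpha := K_A^{-1}\mathbf{1}$, where $\mathbf{1}$ is the all-ones vector. From $K_A\hat\alpha = \mathbf{1}$, i.e.\ $\hat\alpha_i = 1 - \sum_{j\neq i}E_{i,j}\hat\alpha_j$, setting $M := \max_j|\hat\alpha_j|$ yields $|\hat\alpha_i - 1| \leq M\cdot n^{-99}$ for all $i$, which forces $M \leq 1/(1-n^{-99}) \leq 2$ and hence $|\hat\alpha_i - 1| \leq 2n^{-99}$ for every $i$. In particular $\hat\alpha \geq 0$ and $K_A\hat\alpha = \mathbf{1} \geq \mathbf{1}$, so $\hat\alpha$ is feasible for SVM~\eqref{SVM1}. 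To see it is optimal, note that for any feasible $\alpha$ (so $\alpha \geq 0$ and $K_A\alpha \geq \mathbf{1}$) we have $\mathbf{1}^\transp\alpha = (K_A\hat\alpha)^\transp\alpha = \hat\alpha^\transp(K_A\alpha) \geq \hat\alpha^\transp\mathbf{1} = \mathbf{1}^\transp\hat\alpha$, using $\hat\alpha \geq 0$ and $K_A\alpha\geq\mathbf{1}$; consequently $\tfrac12\alpha^\transp K_A\alpha - \tfrac12\hat\alpha^\transp K_A\hat\alpha = \tfrac12(\alpha-\hat\alpha)^\transp K_A(\alpha-\hat\alpha) + \mathbf{1}^\transp(\alpha - \hat\alpha) \geq 0$, the quadratic term being nonnegative since $K_A \succ 0$. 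Hence $\hat\alpha$ is the unique minimizer, so $\alpha_i^* = 1 \pm 2n^{-99} \in [1/2, n]$ for all sufficiently large $n$ (which is all that is needed for the reduction). The claim for SVM~\eqref{SVM2} follows verbatim with $B$ in place of $A$; the labels $-1$ are irrelevant since only products $y_iy_j = 1$ appear in the objective. (One can equivalently pass to the dual of \eqref{SVM1}, which by Definition~\ref{def_dual} is $\max_{\alpha\geq 0}\mathbf{1}^\transp\alpha - \tfrac12\alpha^\transp K_A\alpha$, observe that its unique unconstrained maximizer $K_A^{-1}\mathbf{1}$ is nonnegative and therefore feasible, and invoke Fact~\ref{primal_dual}.)

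\textbf{Main obstacle.} There is no genuinely hard step here: the proof is a one-line fixed-point estimate on $K_A^{-1}\mathbf{1}$ together with a standard strong-convexity inequality. The only point requiring care is the distinctness of the points of $A$ (and of $B$): if $A$ contained $k$ identical copies of some vector, the optimum would cease to be unique and the weight on each copy could drop to $1/k < 1/2$, so the reduction must start from (or preprocess into) a BHCP instance whose sets $A$ and $B$ each consist of pairwise distinct vectors.
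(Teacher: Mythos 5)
Your proof is correct, and it takes a genuinely different route from the paper's. The paper argues directly from feasibility and the primal constraints: it plugs in $\alpha=\mathbf{1}$ to get $\val(A)\le n^2/2$, combines this with $\val(A)\ge\tfrac12\sum_i(\alpha_i^*)^2$ (diagonal terms only) to conclude $\alpha_i^*\le n$, and then reads off $\alpha_i^*\ge 1-\sum_{j\ne i}\alpha_j^*k(a_i,a_j)\ge 1-n\cdot n\cdot\tfrac{1}{10n^2}\ge\tfrac12$ from the $i$-th constraint. You instead identify the minimizer in closed form as $\hat\alpha=K_A^{-1}\mathbf{1}$, verify feasibility and optimality via the exactness of the constraints plus strong convexity, and obtain the much sharper $\alpha_i^*=1\pm O(n^{-99})$ via a fixed-point bound on the linear system. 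What each approach buys: the paper's is shorter and avoids any matrix inversion or KKT-style reasoning, and its crude $[1/2,n]$ window is exactly what the downstream Lemmas \ref{nocase}--\ref{yescase} consume; yours gives a cleaner structural picture (all constraints tight, weights essentially uniform) at the cost of a slightly heavier optimality argument. One small note: your observation that the argument needs the vectors in $A$ (and $B$) to be pairwise distinct is well taken — the paper uses this implicitly when writing $\|a_i-a_j\|_2^2\ge 1$ for $i\ne j$, and your remark about degeneracy under duplicates is a legitimate hygiene point for the reduction, though it is handled by the standard convention that BHCP/OVP instances can be deduplicated in linear time.
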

	\begin{proof}
		First we note that $\val(A)\leq n^2/2$ because the objective value of \eqref{SVM1} is at most $n^2/2$ if we set $\alpha_1=\ldots=\alpha_n=1$. Note that all inequalities of \eqref{SVM1} are satisfied for this setting of variables. Now we lower bound $\val(A)$:
		$$
			\val(A) = {1 \over 2}\sum_{i,j}^n \alpha_i^* \alpha_j^* k(a_i,a_j) \geq {1 \over 2}\sum_{i=1}^n (\alpha_i^*)^2.
		$$
		From $\val(A) \geq {1 \over 2}\sum_{i=1}^n (\alpha_i^*)^2$ and $\val(A) \leq n^2/2$ we conclude that $\alpha_i^*\leq n$ for all $i$.
		
		Now we will show that $\alpha_i^* \geq 1/2$ for all $i=1, \ldots, n$.
		Consider the inequality 
		$$
			\sum_{j=1}^n \alpha_j^* k(a_i,a_j)=\alpha_i^* + \sum_{j \ : \ j \neq i} \alpha_j^* k(a_i,a_j) \geq 1
		$$ which is satisfied by $\alpha_1^*, \ldots, \alpha_n^*$ because this is an inequality constraint in \eqref{SVM1}. Note that $k(a_i,a_j) \leq {1 \over 10n^2}$ for all $i \neq j$ because $C=100 \log n$ and $\|a_i-a_{j}\|_2^2=\Hamming(a_i,a_j)\geq 1$ for all $i \neq j$. Also, we already obtained that $\alpha_j^* \leq n$ for all $j$. This gives us the required lower bound for $\alpha_i^*$:
		$$
			\alpha_i^*\geq 1-\sum_{j \ : \ j \neq i} \alpha_j^* k(a_i,a_j) \geq 1-n \cdot n \cdot {1 \over 10n^2} \geq 1/2.
		$$
	\end{proof}
	
	\paragraph{Additive precision} For particular value of $t$, the sufficient additive precision for solving the three SVMs is ${1 \over 100} \exp(-100\log n \cdot (t-1))$ to be able to distinguish the NO case from the YES case. Since we want to be able to distinguish the two cases for any $t \in \{2, \ldots, d\}$, it suffices to have an additive precision $\exp(-100\log n \cdot d) \leq {1 \over 100} \exp(-100\log n \cdot (t-1))$. From~\cite{alman2015probabilistic} we know that any $d=\omega(\log n)$ is sufficient to show hardness. Therefore, any additive approximation $\exp(-\omega(\log^2 n))$ is sufficient to show the hardness for SVM.
	
	\paragraph{Multiplicative precision} Consider any $\eps =\exp(-\omega(\log^2 n))$ and suppose we can approximate within multiplicative factor $(1+\eps)$ quantities $\val(A)$, $\val(B)$ and $\val(A,B)$. From the proof of Lemma \ref{bounded} we know that $\val(A), \val(B)\leq n^2/2$. If $\val(A,B)\leq 10n^2$, then $(1+\eps)$-approximation of the three quantities allows us to compute the three quantities within additive $\exp(-\omega(\log^2 n))$ factor and the hardness follows from the previous paragraph. On the other hand, if $\val(A,B)>10n^2$, then $(1+\eps)$-approximation of $\val(A,B)$ allows us to determine that we are in the YES case.
	
	In the rest of the section we complete the proof of the theorem by proving Lemma \ref{nocase} and Lemma \ref{yescase}.
	
	\begin{proof}[Proof of Lemma \ref{nocase}]
		Let $\alpha_1^*, \ldots, \alpha_n^*$ and $\beta_1^*, \ldots, \beta_n^*$ be the optimal assignments to SVMs \eqref{SVM1} and \eqref{SVM2}, respectively. We use the notation $\delta:=\exp(-100 \log n \cdot t)$. Note that $k(a_i,b_j)=\exp(-100\log n \cdot \|a_i-b_j\|_2^2)\leq \delta$ for all $i, j$ because $||a_i-b_{j}||_2^2=\Hamming(a_i,b_j)\geq t$ for all $i,j$.
		
		We define $\alpha_i':=\alpha_i^*+10n^{2}\delta$ and $\beta_i':=\beta_i^*+10n^{2}\delta$ for all $i=1, \ldots, n$.
		We observe that $\alpha_i',\beta_i'\leq 2n$ for all $i$ because $\alpha_i^*,\beta_i^*\leq n$ for all $i$ (Lemma \ref{bounded}) and $\delta=\exp(-100\log n \cdot t)\leq {1 \over 10n^{2}}$. Let $V$ be the value of the objective function in \eqref{SVM3} when evaluated on $\alpha_i'$ and $\beta_i'$.
		
		We make two claims. We claim that $\alpha_i'$ and $\beta_i'$ satisfy the inequality constraints in \eqref{SVM3}. This implies that $\val(A,B)\leq V$ since \eqref{SVM3} is a minimization problem. Our second claim is that $V\leq \val(A)+\val(B)+200n^{6}\delta$. The two claims combined complete the proof of the lemma.
		
		We start with the proof of the second claim. We want to show that $V\leq \val(A)+\val(B)+200n^{6}\delta$. We get the following inequality:
		\begin{align*}
			V=& \ {1 \over 2}\sum_{i,j=1}^n \alpha_i' \alpha_j' k(a_i,a_j) \ + \ {1 \over 2}\sum_{i,j=1}^n \beta_i' \beta_j' k(b_i,b_j) \ - \ \sum_{i,j=1}^n \alpha_i' \beta_j' k(a_i,b_j)\\
			\leq & \ {1 \over 2}\sum_{i,j=1}^n \alpha_i' \alpha_j' k(a_i,a_j) \ + \ {1 \over 2}\sum_{i,j=1}^n \beta_i' \beta_j' k(b_i,b_j)
		\end{align*}
		since the third sum is non-negative. It is sufficient to show two inequalities ${1 \over 2}\sum_{i,j=1}^n \alpha_i' \alpha_j' k(a_i,a_j) \leq \val(A)+100n^{6}\delta$ and ${1 \over 2}\sum_{i,j=1}^n \beta_i' \beta_j' k(b_i,b_j) \leq \val(B)+100n^{6}\delta$ to establish the inequality $V\leq \val(A)+\val(B)+200n^{6}\delta$. We prove the first inequality. The proof for the second inequality is analogous. We use the definition of $\alpha_i'=\alpha_i^*+10n^{2}\delta$:
		\begin{align*}
			& \ {1 \over 2}\sum_{i,j=1}^n \alpha_i' \alpha_j' k(a_i,a_j)\\
			=& \ {1 \over 2}\sum_{i,j=1}^n (\alpha_i^*+10n^{2}\delta)(\alpha_j^*+10n^{2}\delta)k(a_i,a_j)\\
			\leq & \ {1 \over 2}\sum_{i,j=1}^n \left(\alpha_i^* \alpha_j^* k(a_i,a_j)+20n^{3}\delta+100n^{4}\delta^2\right) \\
			\leq & \ \val(A)+100n^{6}\delta,
		\end{align*}
		where in the first inequality we use that $\alpha_i^*\leq n$ and $k(a_i,a_j)\leq 1$.
		
		Now we prove the first claim. We show that the inequality constraints are satisfied by $\alpha_i'$ and $\beta_i'$. We prove that the inequality 
		\begin{equation} \label{constraint}
			\sum_{j=1}^n \alpha_j' k(a_i,a_j) - \sum_{j=1}^n \beta_j' k(a_i,b_j) \geq 1
		\end{equation} 
		is satisfied for all $i=1, \ldots, n$. The proof that the inequalities $- \sum_{j=1}^n \beta_j' k(b_i,b_j) + \sum_{j=1}^n \alpha_j' k(b_i,a_j)\leq -1$ are satisfied is analogous. 
		
		We lower bound the first sum of the left hand side of \eqref{constraint} by repeatedly using the definition of $\alpha_i'=\alpha_i^*+10n^{2}\delta$:
		\begin{align*}
			& \ \sum_{j=1}^n \alpha_j'k(a_i,a_j)\\
			= & \ (\alpha_i^*	+10n^{2}\delta)+\sum_{j \ : \ j \neq i} \alpha_j'k(a_i,a_j) \\
			\geq & \ 10n^{2}\delta+\alpha_i^*+\sum_{j \ : \ j \neq i} \alpha_j^*k(a_i,a_j)\\
			= & \ 10n^{2}\delta+\sum_{j=1}^n \alpha_j^*k(a_i,a_j)\\
			\geq & \ 1+10n^{2}\delta.
		\end{align*}
		In the last inequality we used the fact that $\alpha_i^*$ satisfy the inequality constraints of SVM \eqref{SVM1}.
		
		We upper bound the second sum of the left hand side of \eqref{constraint} by using the inequality $\beta_j'\leq 2n$ and $k(a_i,b_j)\leq \delta$ for all $i,j$:
		$$
			\sum_{j=1}^n \beta_j' k(a_i,b_j) \leq 2n^2\delta.
		$$
		
		Finally, we can show that the inequality constraint is satisfied:
		$$
			\sum_{j=1}^n \alpha_j' k(a_i,a_j) - \sum_{j=1}^n \beta_j' k(a_i,b_j) \geq 1+10n^{2}\delta-2n^2\delta \geq 1.
		$$
	\end{proof}
	
	\begin{proof}[Proof of Lemma \ref{yescase}]
		To analyze the YES case, we consider the dual SVMs (see Definition \ref{def_dual}) of the three SVMs \eqref{SVM1}, \eqref{SVM2} and \eqref{SVM3}:
		\begin{enumerate}
			\item The dual SVM of SVM \eqref{SVM1}:
				\begin{equation} \label{dual_SVM1}
					\begin{aligned}
						& \underset{\alpha_1, \ldots, \alpha_n \geq 0}{\text{maximize}}
						& & \sum_{i=1}^n \alpha_i - {1 \over 2}\sum_{i,j=1}^n \alpha_i \alpha_j k(a_i,a_j).
					\end{aligned}
				\end{equation}
			\item The dual SVM of SVM \eqref{SVM2}:
				\begin{equation} \label{dual_SVM2}
					\begin{aligned}
						& \underset{\beta_1, \ldots, \beta_n \geq 0}{\text{maximize}}
						& & \sum_{i=1}^n \beta_i - {1 \over 2}\sum_{i,j=1}^n \beta_i \beta_j k(a_i,a_j).
					\end{aligned}
				\end{equation}
			\item The dual SVM of SVM \eqref{SVM3}:
				\begin{equation} \label{dual_SVM3}
					\begin{aligned}
						& \underset{\substack{\alpha_1, \ldots, \alpha_n \geq 0\\ \beta_1, \ldots, \beta_n \geq 0}}{\text{maximize}}
						& & \sum_{i=1}^n \alpha_i \ + \ \sum_{i=1}^n \beta_i \ - \ {1 \over 2}\sum_{i,j=1}^n \alpha_i \alpha_j k(a_i,a_j) \ - \ {1 \over 2}\sum_{i,j=1}^n \beta_i \beta_j k(b_i,b_j) \ + \ \sum_{i,j=1}^n \alpha_i \beta_j k(a_i,b_j).
					\end{aligned}
				\end{equation}
		\end{enumerate}
		
		Since the optimal values of the primal and dual SVMs are equal, we have that $\val(A)$, $\val(B)$ and $\val(A,B)$ are equal to optimal values of dual SVMs \eqref{dual_SVM1}, \eqref{dual_SVM2} and \eqref{dual_SVM3}, respectively (see Fact \ref{primal_dual}).
		
		Let $\alpha_1^*, \ldots, \alpha_n^*$ and $\beta_1^*, \ldots, \beta_n^*$ be the optimal assignments to dual SVMs \eqref{dual_SVM1} and \eqref{dual_SVM2}, respectively.
		
		Our goal is to lower bound $\val(A,B)$. Since \eqref{dual_SVM3} is a maximization problem, it is sufficient to show an assignment to $\alpha_i$ and $\beta_j$ that gives a large value to the objective function. For this we set $\alpha_i=\alpha_i^*$ and $\beta_j=\beta_j^*$ for all $i,j=1, \ldots, n$. This gives the following inequality:
		\begin{align*}
			\val(A,B)\geq & \ \sum_{i=1}^n \alpha_i^* \ + \ \sum_{i=1}^n \beta_i^* \ - \ {1 \over 2}\sum_{i,j=1}^n \alpha_i^* \alpha_j^* k(a_i,a_j) \ - \ {1 \over 2}\sum_{i,j=1}^n \beta_i^* \beta_j^* k(b_i,b_j) \ + \ \sum_{i,j=1}^n \alpha_i^* \beta_j^* k(a_i,b_j) \\
			\geq & \ \val(A) + \val(B) + \sum_{i,j=1}^n \alpha_i^* \beta_j^* k(a_i,b_j),
		\end{align*}
		where we use the fact that $\val(A)$ and $\val(B)$ are the optimal values of dual SVMs \eqref{dual_SVM1} and \eqref{dual_SVM2}, respectively.
		
		To complete the proof of the lemma, it suffices to show the following inequality:
		\begin{equation} \label{dual_ineq}
			\sum_{i,j=1}^n \alpha_i^* \beta_j^* k(a_i,b_j) \geq {1 \over 4} \exp(-100\log n \cdot (t-1)).
		\end{equation}
		
		Notice that so far we did not use the fact that there is a close pair of vectors $a_i \in A$ and $b_j \in B$ such that $\Hamming(a_i,b_j)\leq t-1$. We use this fact now. We lower bound the left hand side of \eqref{dual_ineq} by the summand corresponding to the close pair:
		$$
			\sum_{i,j=1}^n \alpha_i^* \beta_j^* k(a_i,b_j) \geq \alpha_i^* \beta_j^* k(a_i,b_j) \geq \alpha_i^* \beta_j^* \exp(-100\log n \cdot (t-1)),
		$$
		where in the last inequality we use $\Hamming(a_i,b_j)\leq t-1$ and the definition of the Gaussian kernel.
		
		The proof is completed by observing that $\alpha_i^* \geq {1 \over 2}$ and $\beta_i^* \geq {1 \over 2}$ which follows from Fact \ref{primal_dual} and Lemma \ref{bounded}.
	\end{proof}

\section{Hardness for SVM with the bias term} \label{section_with_bias}
	In the previous section we showed hardness for SVM without the bias term. In this section we show hardness for SVM with the bias term.

	\begin{theorem} \label{hardness_bias}
		Let $x_1, \ldots, x_n \in \{-1,0,1\}^d$ be $n$ vectors and let $y_1, \ldots, y_n \in \{-1,1\}$ be $n$ labels.
		
		Let $k(a,a')=\exp\left(-C\|a-a'\|_2^2\right)$ be the Gaussian kernel with $C=100\log n$.
		
		Consider the corresponding hard-margin SVM with the bias term:
		\begin{equation} \label{eq_primal_hardness_bias}
			\begin{aligned}
				& \underset{\alpha_1, \ldots, \alpha_n \geq 0, \ \ b}{\text{minimize}}
				& & {1 \over 2}\sum_{i,j=1}^n \alpha_i \alpha_j y_i y_j k(x_i,x_j) \\
				& \text{subject to}
				& & y_i f(x_i) \geq 1, \ \ i = 1, \ldots, n,
			\end{aligned}
		\end{equation}
		where $f(x):=b+\sum_{i=1}^n \alpha_i y_i k(x_i,x)$.
		
		Consider any $\eps=\exp(-\omega(\log^2 n))$. 
		Approximating the optimal value of \eqref{eq_primal_hardness_bias} within the multiplicative factor $(1+\eps)$ requires almost quadratic time assuming SETH.
		This holds for the dimensionality $d=O(\log^3 n)$ of the input vectors.
		
		The same hardness result holds for any additive $\exp(-\omega(\log^2 n))$ approximation factor.
	\end{theorem}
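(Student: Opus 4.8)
The plan is to reduce Bichromatic Hamming Close Pair (BHCP) to \emph{three} evaluations of the hard-margin SVM \emph{with} bias, following the no-bias argument of Section~\ref{s:hardness} but adding (i) a coordinate blow-up that makes all relevant kernel matrices ``almost identity'' and (ii) a symmetry gadget that pins the bias near $0$. Start from a BHCP instance $(A,B,t)$ with $A,B\subseteq\{0,1\}^{d_0}$ and $d_0=\omega(\log n)$ chosen slowly enough that $\exp(-\Theta(d_0\log n))\le\eps$ (so $d_0=O(\log^3 n)$); WLOG all vectors are distinct. Put $m:=\max(t,\lceil 22\log n\rceil)$ and fix a binary code $c_1,\dots,c_n\in\{-1,+1\}^m$ with pairwise Hamming distance $\ge\tfrac38 m$ (take a short Gilbert--Varshamov code on $O(\log n)$ coordinates, constructible by greedy search, and repeat coordinates up to length $m$); fix a second such code $c_1',\dots,c_n'$. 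Define $a_i':=(a_i,c_i,\mathbf 0_m)$ and $b_j':=(b_j,\mathbf 0_m,c_j')$ in $\{-1,0,1\}^{d'}$, $d'=d_0+2m=O(d_0)$. Then $\|a_i'-b_j'\|_2^2=\|a_i-b_j\|_2^2+2m$, so close pairs are preserved with the shifted threshold $t'=t+2m$; meanwhile $\|a_i'-a_j'\|_2^2\ge 1+\tfrac32 m\ge t'/2$ for $i\ne j$ (and likewise among the $b_j'$), which is the key extra separation. Finally introduce ``mirror'' points $a_i'^{\mathrm{mir}},b_j'^{\mathrm{mir}}$ obtained by appending a block of $R:=t'$ ones, so a mirror point sits at squared distance $\ge R>t'$ from every non-mirror point. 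All points live in dimension $d'+R=O(\log^3 n)$.

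\textbf{The three computations and the identity.} The algorithm computes, via the with-bias hard-margin SVM solver, $W_{AB}$ on $\{a_i'\}$ labelled $+1$ together with $\{b_j'\}$ labelled $-1$; $W_{AA}$ on $\{a_i'\}$ labelled $+1$ with $\{a_i'^{\mathrm{mir}}\}$ labelled $-1$; and $W_{BB}$ on $\{b_j'\}$ labelled $+1$ with $\{b_j'^{\mathrm{mir}}\}$ labelled $-1$; and it reports ``close pair exists'' iff $W_{AB}-\tfrac12(W_{AA}+W_{BB})>\tfrac12\Delta$ with $\Delta:=n^{-100(t'-1)}$. Write $S:=\sum_{i,j}k(a_i',b_j')$, $\Sigma_A:=\sum_{i\ne j}k(a_i',a_j')$, $\Sigma_B:=\sum_{i\ne j}k(b_i',b_j')$. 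Using strong duality (Fact~\ref{primal_dual}) and the ``almost identity has almost identity inverse'' estimate already exploited for Kernel Ridge Regression (Section~\ref{s:krr}), the goal is $W_{AB}=n-\tfrac12(\Sigma_A+\Sigma_B)+S+O(n^{3-100t'})$, $W_{AA}=n-\Sigma_A+O(n^{3-100t'})$ and $W_{BB}=n-\Sigma_B+O(n^{3-100t'})$, whence $W_{AB}-\tfrac12(W_{AA}+W_{BB})=S+O(n^{3-100t'})$. For $W_{AA},W_{BB}$: the two clusters are genuinely far ($R>t'$), the instance is invariant under swapping the clusters and flipping all labels so $b^\ast=0$ is optimal, and the dual optimum is $K_{\text{cluster}}^{-1}\mathbf 1$ on each block (feasible since the $\ell_1$-masses of the two blocks are equal), giving value $\mathbf 1^\transp K^{-1}\mathbf 1=n-\Sigma+O(n\|E\|^2)$ where $E=K-I$ has $\|E\|_{\mathrm{op}}\le n\cdot n^{-50t'}$ by the code-induced separation. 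For $W_{AB}$: $\mu=\mathbf 1$ is dual-feasible because the $+1$ and $-1$ blocks are balanced ($\sum_i\mu_i y_i=n-n=0$), already giving $W_{AB}\ge n-\tfrac12(\Sigma_A+\Sigma_B)+S$; the matching upper bound follows from bounding the with-bias dual optimum by the unconstrained maximum $\tfrac12\mathbf 1^\transp Q^{-1}\mathbf 1$ and Taylor-expanding $Q^{-1}$.

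\textbf{Controlling the bias of $W_{AB}$.} This is the one genuinely delicate point, since $A'$ and $B'$ need not be symmetric. One argues at the level of the (nearly quadratic) primal objective as a function of $b$: the $+1$ cluster pushes $b$ up and the $-1$ cluster pushes it down with equal leading-order strength, and even the cross term enters only through a factor $1-b^2$, so along the optimal $\alpha(b),\beta(b)$ one gets $F(b)=n+S-\tfrac12(\Sigma_A+\Sigma_B)+(n\pm O(S+\Sigma_A+\Sigma_B))b^2+(\Sigma_A-\Sigma_B)b+O(S^2)$. Hence $b^\ast=O((\Sigma_A-\Sigma_B)/n)=O(n^{1-50t'})$, and the bias contributes only $O(n^{3-100t'})$ to the value — far below the signal. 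In the YES case $S\ge k(a_i',b_j')\ge n^{-100(t'-1)}=\Delta=n^{100-100t'}$ for the close pair, while in the NO case $S\le n^2\cdot n^{-100t'}=n^{2-100t'}\ll\Delta$, so $W_{AB}-\tfrac12(W_{AA}+W_{BB})=S+O(n^{3-100t'})$ exceeds $\tfrac12\Delta$ exactly in the YES case.

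\textbf{Precision and the obstacle.} All three optima are $\Theta(n)$ and must be resolved at additive scale $\Delta=\exp(-\Theta(d_0\log n))=\exp(-\omega(\log^2 n))$, so a $(1+\eps)$-multiplicative approximation with $\eps=\exp(-\omega(\log^2 n))$ (or the additive $\exp(-\omega(\log^2 n))$ version) suffices; a strongly sub-quadratic algorithm for the with-bias hard-margin SVM would then solve BHCP in strongly sub-quadratic time, contradicting SETH. I expect the main obstacle to be precisely the bias-control step for $W_{AB}$ together with the bookkeeping needed to keep every error term (kernel-perturbation $O(n\|E\|^2)$, cross-block leakage $n^2 n^{-100R}$, and bias deviation) strictly below $\Delta$ while the total dimension stays $O(\log^3 n)$; the role of the $\pm1$ identifier codes — raising all within-cluster squared distances to at least $t'/2$ — is what makes this balancing possible.
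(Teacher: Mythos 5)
Your proposal is correct but takes a genuinely different route from the paper. The paper does not re-reduce from BHCP; it reduces the already-proved hardness of the bias-free hard-margin SVM (Theorem~\ref{hardness}) to a \emph{single} with-bias instance. Given hard points $x_1,\dots,x_n$ with labels $y_i$, it forms the $2n$ points $x_1,\dots,x_n,-x_1,\dots,-x_n$ with labels $y_1,\dots,y_n,-y_1,\dots,-y_n$. This negation symmetry makes the dual balance constraint $\sum_i\mu_i y_i=0$ come for free: by concavity, the symmetrized assignment $\alpha_i=\beta_i=\gamma_i$ is WLOG optimal, and the with-bias dual then collapses to the bias-free dual plus a single extra cross-term $\tfrac12\sum_{i,j}\gamma_i\gamma_j y_iy_j k(x_i,-x_j)$, which is annihilated by appending $\log^3 n$ entries equal to $1$ to every $x_i$ (so each $x_i$ is far from every $-x_j$). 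No identifier codes, no mirror gadgets, no almost-identity machinery are needed for this step. Your route --- BHCP to three with-bias SVMs, with Gilbert--Varshamov codes pushing all within-cluster off-diagonals below $n^{-50t'}$ and mirror points making the two single-cluster baselines well-posed in the presence of a bias --- is self-contained and yields a clean identity $W_{AB}-\tfrac12(W_{AA}+W_{BB})\approx S$, but at the cost of the code construction, the mirror gadgets, and the Neumann-series error bookkeeping. One remark on your write-up: the dual sandwich $n-\tfrac12(\Sigma_A+\Sigma_B)+S\le W_{AB}\le\tfrac12\mathbf{1}^\transp Q^{-1}\mathbf{1}$ (lower bound from the feasible balanced point $\mu=\mathbf{1}$, upper bound by dropping all dual constraints), combined with strong duality (Fact~\ref{primal_dual}), already pins $W_{AB}$ to within $O(\|E\mathbf{1}\|_2^2)$ with no separate analysis of $b^\ast$. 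Your ``Controlling the bias of $W_{AB}$'' paragraph is therefore redundant, and as written it is only a heuristic Taylor expansion that would constitute a gap if it were actually load-bearing; keep the dual argument and cut the primal one.
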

	\begin{proof}
		Consider a hard instance from Theorem \ref{hardness} for SVM without the bias term.
		Let $x_1, \ldots, x_n \in \{0,1\}^{d}$ be the $n$ binary vectors of dimensionality $d=\omega(\log n)$ and $y_1, \ldots, y_n \in \{-1,1\}$ be the $n$ corresponding labels.
		For this input consider the dual SVM without the bias term (see Definition~\ref{def_dual}):
		\begin{equation} \label{dSVM}
			\begin{aligned}
				& \underset{\gamma_1, \ldots, \gamma_n \geq 0}{\text{maximize}}
				& & \sum_{i=1}^n \gamma_i \ - \ {1 \over 2}\sum_{i,j=1}^n \gamma_i \gamma_j y_i y_j k(x_i,x_j).
			\end{aligned}
		\end{equation}
	
		We will show how to reduce SVM without the bias term \eqref{dSVM} to SVM with the bias term. By Theorem \ref{hardness} this will give hardness result for SVM with the bias term. We start with a simpler reduction that will achieve almost what we need except the entries of the vectors will not be from the set $\{-1,0,1\}$. Then we will show how to change the reduction to fix this.
		
		Consider $2n$ vectors $x_1, \ldots, x_n, -x_1, \ldots, -x_n \in \{-1,0,1\}^d$ with $2n$ labels $y_1, \ldots, y_n, -y_1, \ldots, -y_n \in \{-1, 1\}$. Consider an SVM with the bias term for the $2n$ vectors, that is, an SVM of the form \eqref{eq_primal_hardness_bias}. From Definition \ref{def_dual} we know that its dual SVM is
		\begin{equation} \label{eq_dual_bias}
			\begin{aligned}
				& \underset{\substack{\alpha_1, \ldots, \alpha_n \geq 0\\ \beta_1, \ldots, \beta_n \geq 0}}{\text{maximize}}
				& & \sum_{i=1}^n \alpha_i \ + \ \sum_{j=1}^n \beta_j \ \\
				& & & - \ {1 \over 2}\sum_{i,j=1}^n \alpha_i \alpha_j y_i y_j k(x_i,x_j) \ - \ {1 \over 2}\sum_{i,j=1}^n \beta_i \beta_j y_i y_j k(x_i,x_j) \ + \ \sum_{i,j=1}^n \alpha_i \beta_j y_i y_j k(x_i,-x_j) \\
				& \text{subject to}
				& & \sum_{i=1}^n \alpha_i y_i=\sum_{j=1}^n \beta_j y_j.
			\end{aligned}
		\end{equation}
		
		Consider any setting of values for $\alpha_i$ and $\beta_j$. Notice that if we swap the value of $\alpha_i$ and $\beta_i$ for every $i$, the value of the objective function of \eqref{eq_dual_bias} does not change. This is implies that we can define $\gamma_i:={\alpha_i+\beta_i \over 2}$ and set $\alpha_i=\beta_i=\gamma_i$ for every $i$. Because of the convexity of the optimization problem, the value of the objective function can only increase after this change. Clearly, the equality constraint will be satisfied. Therefore, w.l.o.g. we can assume that $\alpha_i=\beta_i=\gamma_i$ for some $\gamma_i$ and we can omit the equality constraint.
		
		We rewrite \eqref{eq_dual_bias} in terms of $\gamma_i$ and divide the objective function by $2$:
		\begin{equation} \label{eq_dual_bias_1}
			\begin{aligned}
				& \underset{\gamma_1, \ldots, \gamma_n \geq 0}{\text{maximize}}
				& & \sum_{i=1}^n \gamma_i \ - \ {1 \over 2}\sum_{i,j=1}^n \gamma_i \gamma_j y_i y_j k(x_i,x_j) \ + \ {1 \over 2}\sum_{i,j=1}^n \gamma_i \gamma_j y_i y_j k(x_i,-x_j).
			\end{aligned}
		\end{equation}
		
		Notice that \eqref{eq_dual_bias_1} and \eqref{dSVM} are almost the same. The only difference is the third term $${1 \over 2}\sum_{i,j=1}^n \gamma_i \gamma_j y_i y_j k(x_i,-x_j)$$ in \eqref{eq_dual_bias_1}. We can make this term to be equal to $0$ and not change the first two terms as follows. We append an extra coordinate to every vector $x_i$ and set this coordinate to be large enough value $M$. If we set $M=+\infty$, the third term becomes $0$. The first term does not depend on the vectors. The second term depends only on the distances between the vectors (which are not affected by adding the same entry to all vectors). Thus, the first two terms do not change after this modification.
		
		We showed that we can reduce SVM without the bias term \eqref{dSVM} to the SVM with the bias term \eqref{eq_dual_bias}. By combining this reduction with Theorem \ref{hardness} we obtain hardness for SVM with the bias term. This is almost what we need except that the reduction presented above produces vectors with entries that are not from the set $\{-1,0,1\}$. In every vector $x_i$ or $-x_i$ there is an entry that has value $M$ or $-M$, respectively. In the rest of the proof we show how to fix this, by bounding $M$ by $O(\log^3 n)$ and distributing its contribution over  $O(\log^3 n)$ coordinates. 
		
		\paragraph{Final reduction} The final reduction is as follows:
		\begin{itemize}
			\item Take a hard instance for the SVM without the bias term from Theorem \ref{hardness}. Let $x_1, \ldots, x_n \in \{0,1\}^{d}$ be the $n$ binary vectors of dimensionality $d=\omega(\log n)$ and $y_1, \ldots, y_n \in \{-1,1\}$ be the $n$ corresponding labels.
			\item Append $\log^3 n$ entries to each of the vectors $x_i$, $i=1, \ldots, n$ and set the entries to be $1$.
			\item Solve SVM \eqref{eq_primal_hardness_bias} on the $2n$ vectors $x_1, \ldots, x_n, -x_1, \ldots, -x_n \in \{-1,0,1\}^d$ with $2n$ labels $y_1, \ldots, y_n, -y_1, \ldots, -y_n \in \{-1, 1\}$. Let $V$ be the optimal value of the SVM divided by $2$.
			\item Output $V$.
		\end{itemize}
		
		\paragraph{Correctness of the reduction} From the above discussion we know that we output the optimal value $V$ of the optimization problem \eqref{eq_dual_bias_1}. Let $V'$ be the optimal value of SVM \eqref{dSVM}.
		
		By Theorem \ref{hardness}, it is sufficient to show that $|V-V'|\leq \exp(-\omega(\log^2 n))$ to establish hardness for SVM with the bias term. We will show that $|V-V'|\leq n^{O(1)}\exp(-\log^3 n)$. This gives hardness for additive approximation of SVM with the bias term. However, $|V-V'|\leq \exp(-\omega(\log^2 n))$ is also sufficient to show hardness for multiplicative approximation (see the discussion on the approximation in the proof of Theorem \ref{hardness}).
		
		In the rest of the section we show that $|V-V'|\leq n^{O(1)}\exp(-\log^3 n)$. Let $\gamma_i'$ be the assignment to $\gamma_i$ that achieves $V'$ in SVM \eqref{dSVM}. Let $\gamma_i^*$ be the assignment to $\gamma_i$ that achieves $V$ in \eqref{eq_dual_bias_1}. We will show that $\gamma_i'\leq O(n)$ for all $i=1, \ldots, n$. It is also true that $\gamma_i^*\leq O(n)$ for all $i=1, \ldots, n$ and the proof is analogous. Since $x_1, \ldots, x_n$ are different binary vectors and $k(x_i,x_j)$ is the Gaussian kernel with the parameter $C=100\log n$, we have that $k(x_i,x_j)\leq 1/n^{10}$ for all $i \neq j$. This gives the following upper bound:
		$$
			V'=\sum_{i=1}^n \gamma_i' \ - \ {1 \over 2}\sum_{i,j=1}^n \gamma_i' \gamma_j' y_i y_j k(x_i,x_j) \leq \sum_{i=1}^n\left(\gamma_i' - \left({1 \over 2}-o(1)\right)(\gamma_i')^2\right).
		$$
		Observe that every non-negative summand on the right hand side is at most $O(1)$. Therefore, if there exists $i$ such that $\gamma_i'\geq \omega(n)$, then the right hand side is negative. This contradicts the lower bound $V'\geq 0$ (which follows by setting all $\gamma_i$ to be $0$ in \eqref{dSVM}).
		
		By plugging $\gamma_i'$ into \eqref{eq_dual_bias_1} and using the fact that $\gamma_i'\leq O(n)$, we obtain the following inequality:
		\begin{equation} \label{VV1}
			V \geq V'+{1 \over 2}\sum_{i,j=1}^n \gamma_i' \gamma_j' y_i y_j k(x_i,-x_j) \geq V'-n^{O(1)}\exp(-\log^3 n).
		\end{equation}
		In the last inequality we use $k(x_i,-x_j)\leq \exp(-\log^3 n)$ which holds for all $i,j=1, ..., n$ (observe that each $x_i$ and $x_j$ ends with $\log^3 n$ entries $1$ and use the definition of the Gaussian kernel).
		
		Similarly, by plugging $\gamma_i^*$ into \eqref{dSVM} and using the fact that $\gamma_i^*\leq O(n)$, we obtain the following inequality:
		\begin{equation} \label{VV2}
			V' \geq V-{1 \over 2}\sum_{i,j=1}^n \gamma_i^* \gamma_j^* y_i y_j k(x_i,-x_j) \geq V-n^{O(1)}\exp(-\log^3 n).
		\end{equation}
		
		Inequalities \eqref{VV1} and \eqref{VV2} combined give the desired inequality $|V-V'|\leq n^{O(1)}\exp(-\log^3 n)$.
	\end{proof}
	
\section{Hardness for soft-margin SVM}
\label{s:soft}
	\begin{theorem} \label{hardness_soft}
		Let $x_1, \ldots, x_n \in \{-1,0,1\}^d$ be $n$ vectors and let $y_1, \ldots, y_n \in \{-1,1\}$ be $n$ labels.
		
		Let $k(a,a')=\exp\left(-C\|a-a'\|_2^2\right)$ be the Gaussian kernel with $C=100\log n$.
		
		Consider the corresponding soft-margin SVM with the bias term:
		\begin{equation} \label{eq_hardness_soft}
			\begin{aligned}
				& \underset{\substack{\alpha_1, \ldots, \alpha_n \geq 0, \ \ b \\ \xi_1, \ldots, \xi_n \geq 0}}{\text{minimize}}
				& & {\lambda \over 2}\sum_{i,j=1}^n \alpha_i \alpha_j y_i y_j k(x_i,x_j) \ + \ {1 \over n}\sum_{i=1}^n \xi_i\\
				& \text{subject to}
				& & y_i f(x_i) \geq 1-\xi_i, \ \ i = 1, \ldots, n,
			\end{aligned}
		\end{equation}
		where $f(x):=b+\sum_{i=1}^n \alpha_i y_i k(x_i,x)$.
		
		Consider any $\eps=\exp(-\omega(\log^2 n))$ and any $0<\lambda\leq {1 \over Kn^2}$ for a large enough constant $K>0$.
		Approximating the optimal value of \eqref{eq_hardness_soft} within the multiplicative factor $(1+\eps)$ requires almost quadratic time assuming SETH.
		This holds for the dimensionality $d=O(\log^3 n)$ of the input vectors.
		
		The same hardness result holds for any additive $\exp(-\omega(\log^2 n))$ approximation factor.
	\end{theorem}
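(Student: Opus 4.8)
The plan is to reduce from the hard-margin SVM with bias term of Theorem~\ref{hardness_bias}, exploiting that for a sufficiently small regularization parameter the soft-margin problem degenerates into the hard-margin one. I would take exactly the hard instance constructed there — the $2n$ vectors $x_1,\dots,x_n,-x_1,\dots,-x_n\in\{-1,0,1\}^d$ with $d=O(\log^3 n)$ together with the labels $y_1,\dots,y_n,-y_1,\dots,-y_n$ — and hand it, paired with whatever value of $\lambda\in(0,\tfrac{1}{Kn^2}]$ the theorem allows, to a hypothetical algorithm that approximates the optimal value of~\eqref{eq_hardness_soft} within a factor $1+\eps$. The aim is to show that its output also approximates, within $1+\eps$, the optimal value of the hard-margin SVM with bias on the same instance, which is SETH-hard by Theorem~\ref{hardness_bias}.

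The mechanism is a comparison of the two \emph{dual} problems. By Definition~\ref{def_soft} the dual soft-margin SVM has exactly the same objective $\sum_i\alpha_i-\tfrac12\sum_{i,j}\alpha_i\alpha_jy_iy_jk(x_i,x_j)$ and the same equality constraint $\sum_i\alpha_iy_i=0$ as the dual hard-margin SVM of Definition~\ref{def_dual}, with the single extra box constraint $\alpha_i\le\tfrac{1}{\lambda n}$. I would first invoke the a priori bound on the optimal dual variables: Lemma~\ref{bounded}, together with the analogous estimate derived inside the proof of Theorem~\ref{hardness_bias} showing that the optimal assignment satisfies $\gamma_i^*\le O(n)$, places the hard-margin dual optimum in the cube $[0,O(n)]^n$. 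Hence, as soon as $\lambda\le\tfrac{1}{Kn^2}$ with $K$ exceeding that hidden constant, we have $\tfrac{1}{\lambda n}\ge Kn>O(n)$, so the hard-margin dual optimum is feasible for the soft-margin dual, while conversely the soft-margin dual is merely a further-constrained version of the hard-margin dual with an identical objective. The two dual optima therefore agree, and by Fact~\ref{primal_dual} (strong duality applied to both SVMs) the soft-margin \emph{primal} optimum equals $\lambda$ times this common dual value, i.e.\ equals $\lambda$ times the hard-margin primal optimum; by complementary slackness the optimal slacks $\xi_i$ vanish, which matches the intuition that a tiny $\lambda$ forces every margin constraint to hold exactly.

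Given this identity, dividing the algorithm's output by the known quantity $\lambda$ turns a $(1+\eps)$-multiplicative approximation to the value of~\eqref{eq_hardness_soft} into a $(1+\eps)$-multiplicative approximation to the optimal value of the hard-margin SVM with bias, which by Theorem~\ref{hardness_bias} (and ultimately Theorem~\ref{hardness}) cannot be computed in strongly sub-quadratic time unless SETH fails; the passage between multiplicative and additive $\exp(-\omega(\log^2 n))$ error is identical to the one in the proof of Theorem~\ref{hardness}, using that the hard-margin value is positive and $O(n^2)$. I expect the main obstacle to be the careful bookkeeping for the bias instance rather than any new idea: one must verify that the a priori bound $\gamma_i^*=O(n)$ genuinely carries over to the $2n$-vector construction with the appended $\log^3 n$ all-ones coordinates, so that the box constraint $\alpha_i\le\tfrac{1}{\lambda n}$ is inactive throughout the stated range of $\lambda$; with that in hand the reduction is a direct invocation of the duality facts already established.
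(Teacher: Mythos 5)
Your proposal is correct and follows the paper's proof essentially line by line: reuse the hard instance from Theorem~\ref{hardness_bias}, note that the optimal hard-margin dual variables are bounded by $O(n)$, so for $\lambda \le \tfrac{1}{Kn^2}$ the soft-margin dual's extra box constraint $\alpha_i \le \tfrac{1}{\lambda n}$ is inactive and the two dual optima coincide, which is exactly the paper's ``w.l.o.g.\ we can add these inequalities to the set of constraints.'' Your extra bookkeeping of the factor of $\lambda$ relating the soft-margin primal value to the common dual value (together with the complementary-slackness observation that $\xi_i = 0$) is a more careful reading of the duality than the paper's one-line appeal to Fact~\ref{primal_dual}, but, as you note, it is immaterial to the hardness conclusion once the algorithm's output is divided by the known $\lambda$.
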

	\begin{proof}
		Consider the hard instance from Theorem \ref{hardness_bias} for the hard-margin SVM. The dual of the hard-margin SVM is \eqref{eq_dual_bias}. From the proof we know that the optimal $\alpha_i$ and $\beta_i$ satisfy $\alpha_i=\beta_i=\gamma_i^*\leq 2Kn$ for some large enough constant $K>0$ for all $i=1, \ldots, n$. Thus, w.l.o.g.\ we can add these inequalities to the set of constraints.
		We compare the resulting dual SVM to Definition~\ref{def_soft} and conclude that the resulting dual SVM is a dual of a \emph{soft-margin} SVM with the regularization parameter $\lambda={1 \over Kn^2}$. Therefore, the hardness follows from Theorem \ref{hardness_bias}.
	\end{proof}
 
\section{Hardness for kernel ridge regression}

We start with stating helpful definitions and lemmas.

We will use the following lemma which is a consequence of the binomial inverse theorem.

\begin{lemma} \label{inverse}
	Let $X$ and $Y$ be two square matrices of equal size. Then the following equality holds:
	$$
		(X+Y)^{-1}=X^{-1}-X^{-1}(I+YX^{-1})^{-1}YX^{-1}.
	$$
\end{lemma}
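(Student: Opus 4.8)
The plan is to verify the identity directly by multiplying the claimed expression for $(X+Y)^{-1}$ on the left by $X+Y$ and checking that the product is the identity. Introduce the shorthand $Z := YX^{-1}$, so that $X+Y = (I+Z)X$, and observe that $(X+Y)X^{-1} = I+Z$. With this notation the right-hand side of the claimed identity is $X^{-1} - X^{-1}(I+Z)^{-1}Z$, and multiplying by $X+Y$ on the left gives
\[
	(X+Y)\bigl(X^{-1} - X^{-1}(I+Z)^{-1}Z\bigr) = (I+Z) - (I+Z)(I+Z)^{-1}Z = (I+Z) - Z = I .
\]
A symmetric computation (multiplying on the right instead) confirms it is a two-sided inverse, though since everything is a square matrix of the same size a one-sided inverse already suffices.

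The one point that needs a word of justification is that all the inverses appearing in the statement are well defined. We assume $X$ is invertible (otherwise $X^{-1}$ is meaningless) and that $X+Y$ is invertible (this is the quantity we are inverting). From the factorization $X+Y = (I+Z)X$ with $Z = YX^{-1}$, invertibility of $X$ and of $X+Y$ forces $I+Z = I+YX^{-1}$ to be invertible as well, so the expression on the right-hand side makes sense under exactly the hypotheses under which the left-hand side does. I would state this as a one-line remark before the computation.

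There is essentially no obstacle here: the lemma is a routine special case of the binomial inverse (Woodbury) identity, and the entire argument is the short display above plus the invertibility remark. The only thing to be careful about is bookkeeping with the order of the matrix products, since the matrices need not commute; using the substitution $Z = YX^{-1}$ keeps the algebra transparent and avoids any accidental reordering.
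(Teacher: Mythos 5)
Your proof is correct. The paper itself gives no proof of this lemma; it merely remarks that the identity ``is a consequence of the binomial inverse theorem'' and moves on. Your direct verification via the substitution $Z = YX^{-1}$ and the factorization $X+Y = (I+Z)X$ is a clean, self-contained alternative to that citation, and the accompanying observation that invertibility of $X$ and $X+Y$ forces invertibility of $I+YX^{-1}$ correctly closes the only gap one might worry about. Since the matrices are square, checking a one-sided inverse does indeed suffice, as you note.
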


\begin{definition}[Almost identity matrix] \label{almost_identity}
	Let $X \in \R^{n \times n}$ be a matrix. We call it \emph{almost identity matrix} if $X=I+Y$ and $|Y_{i,j}|\leq n^{-\omega(1)}$ for all $i,j=1, \ldots, n$.
\end{definition}

We will need the following two lemmas.

\begin{lemma} \label{identity_product}
	The product of two almost identity matrices is an almost identity matrix.
\end{lemma}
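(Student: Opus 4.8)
The plan is a one-line matrix expansion followed by elementary entrywise estimates. Write the two almost identity matrices as $X_1 = I + Y_1$ and $X_2 = I + Y_2$; by Definition~\ref{almost_identity} every entry of $Y_1$ and of $Y_2$ has absolute value at most $n^{-\omega(1)}$. Since each matrix has only $n^2$ entries, I would first pass to a \emph{uniform} bound: there is a single function $h = \omega(1)$ (the pointwise minimum of the finitely many witnessing functions) such that $|(Y_1)_{i,j}| \leq n^{-h(n)}$ and $|(Y_2)_{i,j}| \leq n^{-h(n)}$ for all $i,j \in \{1, \ldots, n\}$.

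Then I would expand
$$
	X_1 X_2 = (I + Y_1)(I + Y_2) = I + Y, \qquad Y := Y_1 + Y_2 + Y_1 Y_2,
$$
and bound the entries of $Y$ term by term. The linear part satisfies $|(Y_1)_{i,j} + (Y_2)_{i,j}| \leq 2 n^{-h(n)}$. For the quadratic part, the triangle inequality over the $n$ summands of the matrix product gives $|(Y_1 Y_2)_{i,j}| = \bigl| \sum_{k=1}^n (Y_1)_{i,k} (Y_2)_{k,j} \bigr| \leq n \cdot n^{-h(n)} \cdot n^{-h(n)} = n^{1 - 2h(n)}$. Adding these, $|Y_{i,j}| \leq 2 n^{-h(n)} + n^{1 - 2h(n)} \leq n^{-(h(n) - 2)}$ for $n$ large enough. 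Since $h(n) - 2 = \omega(1)$, this shows $|Y_{i,j}| \leq n^{-\omega(1)}$ for every $i, j$, i.e., $X_1 X_2 = I + Y$ is an almost identity matrix, as required.

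There is essentially no obstacle in this argument; the only point that needs a little care is that the bound $n^{-\omega(1)}$ must be maintained \emph{simultaneously} for all $n^2$ entries and must survive the factor of $n$ introduced by the matrix product together with the constant factor $2$ coming from the sum. Both of these are fixed polynomial factors, and the class of inverse-superpolynomial bounds $n^{-\omega(1)}$ is closed under multiplication by any fixed polynomial in $n$, which is exactly what makes the estimate go through.
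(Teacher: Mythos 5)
Your proof is correct and fills in exactly the straightforward expansion that the paper leaves implicit, since the paper's own proof is simply ``Follows easily from the definition.'' The key point you identify — that the class $n^{-\omega(1)}$ absorbs the extra polynomial factor $n$ from the matrix product — is precisely what makes the claim immediate, and your care about passing to a uniform witnessing function $h$ is a reasonable (if minor) precaution given how the definition is phrased.
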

\begin{proof} Follows easily from the definition.
\end{proof}

\begin{lemma} \label{identity_inverse}
	The inverse of an almost identity matrix is an almost identity matrix.
\end{lemma}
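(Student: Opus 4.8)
The plan is to use the Neumann (geometric) series for the inverse and then bound it entrywise. Write the almost identity matrix as $X = I + Y$, so that $|Y_{i,j}| \le \eps$ for all $i,j$, where $\eps = n^{-\omega(1)}$. First I would observe that since every entry of $Y$ is at most $\eps$ in absolute value, any induced matrix norm of $Y$ is at most $n\eps = n^{-\omega(1)} < 1$ for all sufficiently large $n$; hence $X$ is invertible and
\[
  X^{-1} = (I+Y)^{-1} = \sum_{k=0}^{\infty} (-Y)^{k},
\]
which I would write as $I + Z$ with $Z := \sum_{k \ge 1} (-Y)^k$.

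The remaining step is to show $|Z_{i,j}| \le n^{-\omega(1)}$, so that $X^{-1} = I + Z$ meets Definition~\ref{almost_identity}. This follows from the elementary entrywise estimate $|(Y^k)_{i,j}| \le n^{k-1}\eps^k$, proved by induction on $k$: each additional factor of $Y$ turns a single entry into a sum of $n$ products, each bounded by $\eps$ times the previous bound. Summing the geometric series then gives $|Z_{i,j}| \le \sum_{k \ge 1} n^{k-1}\eps^k = \eps/(1 - n\eps) \le 2\eps$ for large $n$, and $2\eps$ is still $n^{-\omega(1)}$. Hence $X^{-1}$ is an almost identity matrix, as claimed.

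The only point needing any care is the domination condition $n\eps < 1$ — i.e., that the inverse-superpolynomial bound on the entries of $Y$ comfortably absorbs the factor of $n$ lost at each matrix multiplication — but this is immediate from $\eps = n^{-\omega(1)}$, so there is no real obstacle. An alternative route would be to invoke Lemma~\ref{inverse} with the roles $X \mapsto I$ and $Y \mapsto Y$, combined with Lemma~\ref{identity_product} to control the correction term, but the direct series estimate seems shortest.
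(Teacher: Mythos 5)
Your proof is correct and follows essentially the same route as the paper's: both expand $(I \pm Y)^{-1}$ as a Neumann series, bound $|(Y^k)_{i,j}| \le n^{k-1}\eps^k$ entrywise, and sum the resulting geometric series to conclude the off-identity part is still $n^{-\omega(1)}$. The only cosmetic differences are the sign convention ($I+Y$ versus the paper's $I-Y$) and that the paper phrases the entrywise bound via a dominating all-$\eps$ matrix $Y'$ rather than your explicit induction.
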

\begin{proof}
	Let $X$ be an almost identity matrix. We want to show that $X^{-1}$ is an almost identity matrix. We write $X=I-Y$ such that $|Y_{i,j}|\leq n^{-\omega(1)}$ for all $i,j=1, \ldots, n$. We have the following matrix equality
	$$
		X^{-1}=(I-Y)^{-1}=I+Y+Y^2+Y^3+\ldots
	$$
	To show that $X^{-1}$ is an almost identity, we will show that the absolute value of every entry of $Z:=Y+Y^2+Y^3+\ldots$ is at most $n^{-\omega(1)}$. Let $\eps\leq n^{-\omega(1)}$ is an upper bound on $|Y_{i,j}|$ for all $i,j=1, \ldots, n$. Then $|Z_{i,j}|\leq Z_{i,j}'$, where $Z':=Y'+(Y')^2+(Y')^3+\ldots$ and $Y'$ is a matrix consisting of entries that are all equal to $\eps$. The proof follows since $Z_{i,j}'=\sum_{k=1}^{\infty}\eps^k n^{k-1}\leq 10\eps\leq n^{-\omega(1)}$.
\end{proof}

In the rest of the section we prove Theorem \ref{ridge_regression}.
\begin{proof}[Proof of Theorem \ref{ridge_regression}]
	We reduce the BHCP problem to the problem of computing the sum of the entries of $K^{-1}$.
	
	Let $A$ and $B$ be the two sets of binary vectors from the BHCP instance. Let $K \in \R^{2n \times 2n}$ be the corresponding kernel matrix. We can write the kernel matrix $K$ as combination of four smaller matrices $K^{1,1}, K^{1,2}, K^{2,1}, K^{2,2} \in \R^{n \times n}$:
	$$
		K=
			\left[
			\begin{array}{c|c}
			K^{1,1} & K^{1,2} \\
			\hline
			K^{2,1} & K^{2,2}
			\end{array}
			\right].
	$$
	$K^{1,1}$ is the kernel matrix for the set of vectors $A$ and $K^{2,2}$ is the kernel matrix for the set of vectors $B$. We define two new matrices $X, Y \in \R^{2n \times 2n}$: 
		$X=\left[
			\begin{array}{c|c}
			K^{1,1} & 0 \\
			\hline
			0 & K^{2,2}
			\end{array}
			\right]
		$ and
		$Y=\left[
			\begin{array}{c|c}
			0 & K^{1,2} \\
			\hline
			K^{2,1} & 0
			\end{array}
			\right]
		$.
	
	For any matrix $Z$, let $s(Z)$ denote the sum of all entries of $Z$. Using Lemma \ref{inverse}, we can write $K^{-1}$ as follows:
	$$
		K^{-1}=(X+Y)^{-1}=X^{-1}-X^{-1}(I+YX^{-1})^{-1}YX^{-1}.
	$$
	We note that the matrix $X$ is an almost identity and that $|Y_{i,j}|\leq n^{-\omega(1)}$ for all $i,j=1, \ldots, 2n$. This follows from the fact that we use the Gaussian kernel function with the parameter $C=\omega(\log n)$ and the input vectors are binary. Combining this with lemmas \ref{identity_product} and \ref{identity_inverse} allows us to conclude that matrices $X^{-1}(I+YX^{-1})^{-1}$ and $X^{-1}$ are almost identity. Since all entries of the matrix $Y$ are non-negative, we conclude that 
	$$
		s(X^{-1}(I+YX^{-1})^{-1}YX^{-1})=s(Y)(1 \pm n^{-\omega(1)}).
	$$
	We obtain that
	\begin{align*}
		s(K^{-1})& =s(X^{-1})-s(X^{-1}(I+YX^{-1})^{-1}YX^{-1})\\
		& =s(X^{-1})-s(Y)(1 \pm n^{-\omega(1)})\\
		& =s\left((K^{1,1})^{-1}\right)+s\left((K^{2,2})^{-1}\right)-s(Y)(1 \pm n^{-\omega(1)}).
	\end{align*}
	
	Fix any $\alpha=\exp(-\omega(\log^2 n))$. Suppose that we can estimate each $s(K^{-1})$, $s\left((K^{1,1})^{-1}\right)$ and $s\left((K^{2,2})^{-1}\right)$ within the additive factor of $\alpha$.
	This allows us to estimate $s(Y)$ within the additive factor of $10\alpha$.
	This is enough to solve the BHCP problem. We consider two cases.
	
	\paragraph{Case 1} There are no close pairs, that is, for all $i,j=1, \ldots, n$ we have $||a_i-b_j||_2^2\geq t$ and $\exp(-C||a_i-b_j||_2^2)\leq \exp(-Ct)=:\delta$. Then $s(Y)\leq 2n^2\delta$.
	
	\paragraph{Case 2} There is a close pair. That is, $||a_{i'}-b_{j'}||_2^2\leq t-1$ for some $i',j'$. This implies that $\exp(-C||a_{i'}-b_{j'}||_2^2)\geq \exp(-C(t-1))=:\Delta$. Thus, $s(Y)\geq \Delta$.
	
	Since $C=\omega(\log n)$, we have that $\Delta\geq 100n^{2}\delta$ and we can distinguish the two cases assuming that the additive precision $\alpha=\exp(-\omega(\log^2 n))$ is small enough.
	
	\paragraph{Precision} To distinguish $s(Y)\leq 2n^2 \delta$ from $s(Y)\geq \Delta$, it is sufficient that $\Delta\geq 100n^2\delta$ and $\alpha\leq \Delta/1000$. We know that $\Delta\geq 100n^2\delta$ holds because $C=\omega(\log n)$. Since $\Delta\leq \exp(-Cd)$, we want to choose $C$ and $d$ such that the $\alpha\leq \Delta/1000$ is satisfied. We can do that because we can pick $C$ to be any $C=\omega(\log n)$ and the BHCP problem requires almost quadratic time assuming SETH for any $d=\omega(\log n)$.
	
	We get that additive $\eps$ approximation is sufficient to distinguish the cases for any $\eps=\exp(-\omega(\log^2 n))$.
	We observe that $s(K^{-1})\leq O(n)$ for any almost identity matrix $K$. This means that $(1+\eps)$ multiplicative approximation is sufficient for the same $\eps$. This completes the proof of the theorem.
\end{proof}
 \section{Hardness for training of the final layer of a neural network}

Recall that the trainable parameters are $\alpha:=(\alpha_1, \ldots, \alpha_n)^\transp$, and that the optimization problem \eqref{opt1} is equivalent to the following optimization problem:
\begin{equation} \label{opt2}
	\begin{aligned}
		& \underset{\alpha \in \R^n}{\text{minimize}}
		& & \sum_{i=1}^m l(y_i \cdot (M\alpha)_i),
	\end{aligned}
\end{equation}
where $M \in \R^{m \times n}$ is the matrix defined as $M_{i,j}:=S(a_i^\transp b_j)$ for $i=1, \ldots, m$ and $j=1, \ldots n$.
For the rest of the section we will use $m=\Theta(n)$.

Let $A=\{a_1, \ldots, a_n\} \subseteq \{0,1\}^d$ and $B=\{b_1, \ldots, b_n\} \subseteq \{0,1\}^d$ with $d=\omega(\log n)$ be the input to the Orthogonal Vectors problem. We construct a matrix $M$ as a combination of three smaller matrices:
$$
	M=
		\left[
			\begin{array}{c}
				M_1 \\
				M_2 \\
				M_2
			\end{array}
		\right].
$$
Both matrices $M_1, M_2 \in \R^{n \times n}$ are of size $n \times n$. Thus we have that the number of rows of $M$ is $m=3n$. 

We describe the two matrices $M_1, M_2$ below.
Recall that $v_0$, $v_1$, and $v_2$ are given in Definition \ref{activation}.
\begin{itemize}
	\item $(M_1)_{i,j}=S\left(v_0-(v_2-v_0)\cdot a_i^\transp b_j\right)$.
    For any two real values $x, y \in \R$ we write $x \approx y$ if $x=y$ up to an inversely superpolynomial additive factor. In other words, $|x-y|\leq n^{-\omega(1)}$. We observe that if two vectors $a_i$ and $b_j$ are orthogonal, then the corresponding entry $(M_1)_{i,j}=S(v_0)=\Theta(1)$ and otherwise $(M_1)_{i,j}\approx 0$. We will show that an $\left( 1+{1 \over 4n} \right)$-approximation of the optimal value of the optimization problem \eqref{opt2} will allow us to decide whether there is an entry in $M_1$ that is $S(v_0)=\Theta(1)$. This will give the required hardness.

    It remains to show how to construct the matrix $M_1$ using a neural network.
    We set the weights for the $j$-th hidden unit to be $\left[\begin{array}{c}b_j \\1\end{array}\right]$.
    That is, $d$ weights are specified by the vector $b_j$, and we add one more input with weight $1$.
    The $i$-th example (corresponding to the $i$-th row of the matrix $M_1$) is the vector $\left[\begin{array}{c}-(v_2-v_0)a_i \\v_0\end{array}\right]$.
    The output of the $j$-th unit on this example (which corresponds to entry $(M_1)_{i,j}$) is equal to 
	\begin{align*}
    S\left(\left[\begin{array}{c}-(v_2-v_0)a_i \\v_0\end{array}\right]^\transp \left[\begin{array}{c}b_j \\1\end{array}\right]\right) \; &= \; S\left(v_0-(v_2-v_0)\cdot a_i^\transp b_j\right)\\
		&= \; (M_1)_{i,j}
	\end{align*}
	as required.
	\item $(M_2)_{i,j}=S\left(v_1-(v_2-v_1)\cdot \bar{b_i}^\transp b_j\right)$, where $\bar{b_i}$ is a binary vector obtained from the binary vector $b_i$ by complementing all bits. We observe that this forces the diagonal entries of $M_2$ to be equal to $(M_2)_{i,i}=S(v_1)=1/n^{1000K}$ for all $i=1, \ldots, n$ and the off-diagonal entries to be $(M_2)_{i,j}\approx 0$ for all $i \neq j$.\footnote{For all $i \neq j$ we have $\bar{b_i}^\transp  b_j\geq 1$. This holds because all vectors $b_i$ are distinct and have the same number of 1s. 
	}
\end{itemize}
	
	To complete the description of the optimization problem \eqref{opt2}, we assign labels to the inputs corresponding to the rows of the matrix $M$.
We assign label $1$ to all inputs corresponding to rows of the matrix $M_1$ and the first copy of the matrix $M_2$.
We assign label $-1$ to all remaining rows of the matrix $M$ corresponding to the second copy of matrix $M_2$.
	
It now suffices to prove Lemma \ref{upper_bound} and Lemma \ref{lower_bound}.

	\begin{proof}[Proof of Lemma \ref{upper_bound}]
		To obtain an upper bound on the optimal value in the presence of an orthogonal pair, we set the vector $\alpha$ to have all entries equal to $n^{100K}$. For this $\alpha$ we have 
		\begin{itemize}
			\item $\left|(M_1 \alpha)_i\right|\geq \Omega(n^{100K})$ for all $i=1, \ldots, n$ such that there is exists $j=1, \ldots, n$ with $a_i^\transp b_j=0$. Let $x \geq 1$ be the number of such $i$.
			\item $\left|(M_1 \alpha)_i\right|\leq n^{-\omega(1)}$ for all $i=1, \ldots, n$ such that there is no $j=1, \ldots, n$ with $a_i^\transp b_j=0$. The number of such $i$ is $n-x$.
		\end{itemize}
		By using the second property of Definition \ref{nice}, the total loss corresponding to $M_1$ is upper bounded by 
		\begin{align*}
			x \cdot l(\Omega(n^{100K})) + (n-x) \cdot l(n^{-\omega(1)}) & \leq x \cdot o(1) + (n-x) \cdot (l(0)+o(1/n))\\
			& \leq (n-1) \cdot l(0)+o(1)=:l_1.
		\end{align*}
		Finally, the total loss corresponding to the two copies of the matrix $M_2$ is upper bounded by 
		\begin{align*}
			2n \cdot l(\pm O(n^{-800K})) & = 2n \cdot (l(0) \pm o(1/n)) \\
			& \leq 2n \cdot l(0)+ o(1) =: l_2.
		\end{align*}
		The total loss corresponding to the matrix $M$ is upper bounded by $l_1+l_2\leq (3n-1)\cdot l(0)+o(1)$ as required.
	\end{proof}
	
	\begin{proof}[Proof of Lemma \ref{lower_bound}]
		We first observe that the total loss corresponding to the two copies of the matrix $M_2$ is lower bounded by $2n \cdot l(0)$. Consider the $i$-th row in both copies of matrix $M_2$. By using the convexity of the function $l$, the loss corresponding to the two rows is lower bounded by $l((M_2 \alpha)_i)+l(-(M_2 \alpha)_i) \geq 2\cdot l(0)$. By summing over all $n$ pairs of rows we obtain the required lower bound on the loss.
		
		We claim that $\|\alpha\|_{\infty}\leq n^{10^6 K}$. Suppose that this is not the case and let $i$ be the index of the largest entry of $\alpha$ in magnitude. Then the $i$-th entry of the vector $M_2 \alpha$ is 
		\begin{align*}
			(M_2 \alpha)_i \; & = \; \alpha_i (M_2)_{i,i} \, \pm \, n \cdot \alpha_i \cdot n^{-\omega(1)}\\
			& \geq \; {\alpha_i \over n^{1000K}} - \alpha_i n^{-\omega(1)} \; ,
		\end{align*}
		where we recall that the diagonal entries of matrix $M_2$ are equal to $(M_2)_{i,i}=S(v_1)=1/n^K$. If $|\alpha_i|>n^{10^6K}$, then $|(M_2\alpha)_i|\geq n^{1000k}$. However, by the second property in Definition \ref{nice}, this implies that the loss is lower bounded by $\omega(n)$ for the $i$-row (for the first or the second copy of $M_2$). This contradicts a simple lower bound of $4n\cdot l(0)$ on the loss obtained by setting $\alpha=0$ to be the all $0$s vector. We use the third property of a nice loss function which says that $l(0)>0$.
		
		For the rest of the proof, we assume that $\|\alpha\|_{\infty}\leq n^{10^6 K}$. We will show that the total loss corresponding to $M_1$ is lower bounded by $n\cdot l(0)-o(1)$. This is sufficient since we already showed that the two copies of $M_2$ contribute a loss of at least $2n \cdot l(0)$.

		Since all entries of the matrix $M_1$ are inversely superpolynomial (there is no pair of orthogonal vectors), we have that $|(M_1\alpha)_i|\leq n^{-\omega(1)}$ for all $i=1, \ldots, n$. Using the second property again, the loss corresponding to $M_1$ is lower bounded by 
		\begin{align*}
			n \cdot l(\pm n^{-\omega(1)})& \geq n \cdot (l(0)-o(1/n))\\
			& \geq n\cdot l(0)-o(1)
		\end{align*}
		as required.
	\end{proof}
 	
\end{document}